\definecolor{myblue}{RGB}{80,80,160}
\definecolor{mygreen}{RGB}{80,160,80}
\definecolor{myred}{RGB}{250,128,114}
\definecolor{mygold}{RGB}{255,215,0}
\providecommand{\papertitle}{Inducing Equilibria in Networked Public Goods Games through Network Structure Modification}
\title[Network Structure Modification for Networked Public Goods Games]{\papertitle}
\author{David Kempe}
\affiliation{
 \institution{University of Southern California}
}
\email{david.m.kempe@gmail.com}
\author{Sixie Yu}
\affiliation{
  \institution{Washington University in St. Louis}
}
\email{sixie.yu@wustl.edu}
\author{Yevgeniy Vorobeychik}
\affiliation{
  \institution{Washington University in St. Louis}
}
\email{yvorobeychik@wustl.edu}
\providecommand{\IS}[1]{\ensuremath{x_{#1}}\xspace} 
\providecommand{\ISV}{\ensuremath{\bm{x}}\xspace}  
\providecommand{\Neigh}[2][]{\ensuremath{%
    \ifthenelse{\equal{#1}{}}{\mathcal{N}_{#2}} {\mathcal{N}_{#2}^{(#1)}}}\xspace} 
\providecommand{\NeInv}[2][]{\ensuremath{%
    \ifthenelse{\equal{#1}{}}{n_{#2}}{n_{#2}^{(#1)}}}\xspace} 
\providecommand{\InvCost}[1]{\ensuremath{c_{#1}}\xspace} 
\providecommand{\EdgeCost}[1]{\ensuremath{\gamma_{#1}}\xspace} 
\providecommand{\CPSNE}{\ensuremath{\mathcal{X}}\xspace} 
\providecommand{\Degs}[1]{\ensuremath{D_{#1}}\xspace} 
\providecommand{\Degree}[2][]{\ensuremath{
    \ifthenelse{\equal{#1}{}}{d_{#2}}{d_{#1}(#2)}
}\xspace} 
\providecommand{\NetDesignP}[2]{\ensuremath{\textsc{NDDS}\,(\text{#1}, #2)}\xspace} 
\newcommand{\XStar}{\ensuremath{\bm{x}^{\ast}}\xspace}          
\newcommand{\GFUNC}[1][]{\ensuremath{%
    \ifthenelse{\equal{#1}{}}{g_i}{g_{#1}}}\xspace}
\newcommand{\GFunc}[2][]{\ensuremath{\GFUNC[#1](#2)}\xspace}
\providecommand{\Norm}[2][]{\ensuremath{%
\ifthenelse{\equal{#1}{}}{\|{#2}\|}{\|{#2}\|_{{#1}}}}\xspace}
\providecommand{\SetCard}[1]{\ensuremath{| #1 |}\xspace}
\providecommand{\SET}[1]{\ensuremath{\{ #1 \}}\xspace}
\providecommand{\Set}[2]{\ensuremath{\SET{#1 \mid #2}}\xspace}
\DeclareSymbolFont{AMSb}{U}{msb}{m}{n}
\DeclareMathSymbol{\N}{\mathord}{AMSb}{"4E}
\DeclareMathSymbol{\B}{\mathord}{AMSb}{"42}
\DeclareMathSymbol{\Z}{\mathord}{AMSb}{"5A}
\DeclareMathSymbol{\R}{\mathord}{AMSb}{"52}
\providecommand{\Omit}[1]{{}}
\begin{abstract}
  Networked public goods games model scenarios in which self-interested agents decide whether or how much 
to invest in an action that benefits not only themselves, but also their network neighbors.
Examples include vaccination, security investment, and crime reporting.
While every agent's utility is increasing in their neighbors' joint investment, the specific form can vary widely depending on the scenario.
A principal, such as a policymaker, 
may wish to induce large investment from the agents.
Besides direct incentives, an important lever here is the network structure itself: by adding and removing edges, for example, through community meetings, 
the principal can change the nature of the utility functions, resulting in different, and perhaps socially preferable, equilibrium outcomes.
We initiate an algorithmic study of targeted network modifications 
with the goal of inducing equilibria of a particular form.
We study this question for a variety of equilibrium forms (induce all agents to invest, at least a given set $S$, exactly a given set $S$, at least $k$ agents), and for a variety of utility functions.
While we show that the problem is NP-complete for a number of these scenarios, we exhibit a broad array of scenarios in which the problem can be solved in polynomial time by non-trivial reductions to (minimum-cost) matching problems.

\end{abstract}
\begin{document}

\maketitle

\section{Introduction}


Groups of individuals often encounter the following type of scenario.
Each member of the group can decide whether or how much effort
(or money) to invest for the common good;
everyone in the group (including the individual) profits from all
members' efforts, but the individual incurs a cost for the investment.
Examples of such scenarios include decisions whether or not to
vaccinate,
report crime in a neighborhood,
invest in security,
chip in on department committee work,
keep one's yard representable or sidewalk shoveled,
or purchase a tool that one's friends or neighbors can share.

These and many other scenarios are modeled by \emph{public goods games}
\cite{samuelson:public-expenditure,mas-collel:whinston:green}.
In many applications, including most of the ones listed above,
the benefits of an individual's effort are not reaped by all group members,
but only by those with whom the individual interacts.
This naturally motivates the definition of
\emph{networked public goods games}~\cite{bramoulle:kranton:public-goods,bramoulle:kranton:damours:strategic,PublicGoods},
in which a (given, known) network captures which individuals will
benefit from which other individuals' efforts.

In networked public goods games, an agent's utility depends on
(1) her%
\footnote{For clarity, we will always refer to agents using female
  pronouns, and the principal using male pronouns.}
own investment decision, which incurs some cost (in terms of money, time,
effort, or risk),
and (2) the aggregate investment by the agent and her neighbors in the
network.
(Precise definitions of all concepts are given in Section~\ref{sec:model}.)
While the agent's utility is always non-decreasing in the neighbors' joint
investment, the specific functional form can vary widely.
For example, it may suffice to have a single friend with a useful
tool, but lowering crime rates in a neighborhood may require broad
participation in crime reporting.

For practically all public goods scenarios,
the equilibria involve significant underinvestment.
That is, significantly fewer agents expend effort than would lead to a
socially optimal outcome,
a phenomenon closely related to the so-called ``Tragedy of the
  Commons'' and the ``Bystander Effect.''
This kind of underinvestment is not only predicted by theory,
but typically observed in practice as well.
When the equilibrium outcomes are socially undesirable,
a principal, such as a policymaker, may be interested in changing the
parameters of the game so as to induce
equilibrium outcomes that are better aligned with the social interest.
A natural and traditional approach is to change the cost structure,
by rewarding investment (decreasing investment costs) or punishing
failure to invest (increasing the cost of non-investment).
For many of the scenarios listed above, monetary fines, other types of
penalties, or social pressure implement such rewards or punishments.
Generally, the design of cost or reward structures and rules of
encounter is at the heart of work in
mechanism and market design~\cite{Nisan07}.

In graphical games, such as the networked public goods game, however,
there is an additional important parameter that may be subject to
modification: the network structure itself.
For example, to facilitate crime prevention, the principal may organize
community meetings in particular neighborhoods,
increasing the density of the social network among the community
members.
The principal may also add individual links, for example,
by introducing individuals to one another, or weaken relationships
(remove links) by adding hurdles to specific interactions (e.g., if
the public goods game represents strategic interactions in a criminal
organization, and law enforcement chooses which relationships to monitor).

We initiate an algorithmic study of targeted network structure
modifications in networked public goods games with binary actions,
with the goal of inducing pure strategy Nash equilibria (PSNE) with
desirable properties.
We will consider a principal who is aiming for ``high investment.''
Each edge has a cost for addition/removal, 
and to induce a desirable equilibrium of the game,
the principal can add/remove edges from the network,
subject to an upper bound on the total cost.

Naturally, there are different concrete ways of capturing the goal of
``high investment,''
and we consider the following four natural candidate properties
which the target equilibrium should satisfy:
(1) all agents invest,
(2) exactly a given subset $S$ of agents invest,
(3) at least the agents in a given subset $S$ invest, and
(4) at least $r$ agents invest.
We study each of these objectives for games with a variety of utility functions,
ranging from general monotone functions to generalized sigmoid functions,
as well as convex and concave functions.
While we show (in Section~\ref{sec:hard_general})
that the problem is NP-complete for several of these scenarios,
we exhibit (in Section~\ref{sec:tractable}) a broad array of settings
in which it can be solved in polynomial time by non-trivial reductions
to minimum-cost matching problems.
Both the hardness and algorithmic results are established by showing equivalence between our problem and the problem of modifying an input graph $G$ such that each node $i$ ends up with its degree in a prescribed set $\Degs{i}$.
Modifying graphs to achieve prescribed node degrees is a problem with an extensive line of work (discussed in detail below), and our results provide a significant generalization of prior algorithmic results.
The full summary of our main results is presented in Table~\ref{tab:complexity}.

\begin{table}[htb]
\small
\centering
\begin{tabular}{@{}lllll@{}}
\toprule
               & all                    & $=S$      & $\supseteq S$     & $\ge r$  \\ \midrule
general        & hard                   & hard      & hard              & hard              \\
sigmoid        & poly                   & poly      & hard              & hard              \\
convex         & poly                   & poly      & hard              & hard              \\
concave        & poly                   & poly      & hard              & hard              \\ 				\bottomrule
\end{tabular}
\caption{An overview of our results.}
\label{tab:complexity}
\end{table}

For all entries of the table marked ``poly,''
our algorithms handle fully general edge cost structures.
On the other hand, all of the hardness proofs apply already in very
restrictive cases, allowing only one of addition and removal,
and either allowing unlimited changes of the allowed type,
or simply restricting the total \emph{number} of added/removed edges.

\noindent{\bf Related Work: }
Our work is conceptually connected to three broad threads in the literature:
graphical games, mechanism and market design, and network design.
Graphical models of games capture various forms of structure in the players' utility functions which limit the scope of utility dependence on other players' actions~\cite{Shoham09}.
An important class of these are \emph{graphical games}, where a player's utility only depends on the actions of her network neighbors~\cite{kearns2013graphical}.
Networked public goods games are one important example of graphical games, with utilities only depending on the investment choices by a player's network neighbors~\cite{bramoulle:kranton:public-goods,galeotti2010network,Levit18}.
\citeauthor{bramoulle:kranton:public-goods}~\cite{bramoulle:kranton:public-goods} studied the effects of network structure modification on a public goods game. 
Our results, however, are novel in several respects. 
First, they assumed that only a single edge is added to the underlying network and studied how the addition affects welfare.
In contrast, we consider addition and deletion of sets of edges, and focus on the algorithmic aspects of the problem.
Moreover, \citeauthor{bramoulle:kranton:public-goods} only consider strictly concave utility functions, whereas we study convex, concave, and general sigmoid utility functions. (Detailed definitions are in Section~\ref{sec:model}.)
\citeauthor{galeotti2010network}~\cite{galeotti2010network} also considered the effects of modifying the underlying network of a public goods game on equilibrium behavior and welfare.
However, their analysis is restricted to convex or concave utilities which are degree symmetric (i.e., if two nodes have the same degree, they must have the same utility function), involves incomplete information of players about the network, and does not consider the associated algorithmic problem.
In contrast, we focus on algorithmic questions and allow heterogeneous utilities.
\citeauthor{yu2019computing}~\cite{yu2019computing} studied algorithmic aspects of binary public goods games.
They showed that in general, checking the existence of a pure-strategy Nash equilibrim is NP-complete, and also identified tractable cases based on restrictions of either the utility functions or the underlying network structure.
However, they did not consider modifying the network structure to induce certain Nash equilibria.
\citeauthor{grossklags2008security}~\cite{grossklags2008security} studied how economic agents invest in security through the lens of public goods games.
The value of the public goods is the overall protection level.
Each agent has two options: investing in self-protection, or investing in self-insurance. 
The former affects the overall protection level, as well as the loss incurred by the agent, while the latter only affects the agent's own loss. 
They analyzed the Nash equilibria under five economic settings, which characterize different threat models. 


Mechanism and market design (e.g., \cite{Nisan07}) also aim to change the parameters of a game to induce equilibrium outcomes favored by a principal.
However, the specific ways in which the game's parameters are changed are vastly different; key approaches include the design of market structure, such as matching market mechanisms~\cite{Haeringer18}, payments, as in traditional mechanism design~\cite{Nisan07}, or the structure of information available to the players~\cite{dughmi:information-structure}.

Another relevant line of research is network design. 
The idea of altering a (social) network in order to induce certain outcomes is present in a number of recent works, for a variety of different outcomes. \citeauthor{Sheldon10} \cite{Sheldon10} aim to modify the network so as to maximize the spread of cascades, while \cite{CTPEFF:edge-manipulation,ghosh:boyd:growing,TPEFF:gelling} aim to alter the spectral gap of the network to make it more or less connected.
Along similar lines, \citeauthor{bredereck2017manipulating}~\cite{bredereck2017manipulating} considered the converging state of simple information diffusion dynamics, with a specific focus on how the removal of edges can be used to manipulate the majority opinion in such outcomes.
Similar ideas arise in a recent line of work (e.g., \cite{sina2015adapting,matteo2020manipulating} and the references therein) studying how the outcome of an election can be manipulated by altering network structures.
In a sense, the converse problem is studied by \citeauthor{amelkin2019fighting}~\cite{amelkin2019fighting}, who aim to reduce opinion control by recommending (i.e., adding) links to social network users.
Similar ideas are present in the work of \citeauthor{garimella:morales:gionis:mathioudakis} \cite{garimella:morales:gionis:mathioudakis}, who aim to decrease opinion polarization by connecting pairs of individuals with differing opinions.
\citeauthor{sless2014forming}~\cite{sless2014forming} investigated the problem of coalition formation through adding links to the underlying social network. 
All of these works share the high-level goal of inducing (socially) preferable equilibrium behavior, but the specific optimization goals, and with them the algorithmic approaches, are vastly different.

An analysis of connections between equilibrium outcomes of games and network structure was carried out by \citeauthor{bramoulle:kranton:damours:strategic}~\cite{bramoulle:kranton:damours:strategic}.
They found that the smallest eigenvalue $\lambda_{\min}$ of the network's adjacency matrix is a key quantity for equilibria; recall that spectral properties also play a role in several of the other related papers discussed above.
\citeauthor{bramoulle:kranton:damours:strategic} also considered the effects of network structure modification; in particular, they investigated how the addition of edges affects $\lambda_{\min}$, which in turn provides a qualitative understanding of the effects of edge additions on equilibria. However, they did not investigate the algorithmic issues involved.
\citeauthor{milchtaich2015network}~\cite{milchtaich2015network} studied equilibrium existence as a function of network topology in weighted network congestion games.

Most directly relevant is a line of work on editing graphs to satisfy degree constraints; many variants of this problem have been extensively studied.
Perhaps the most basic question is whether a given graph $G$ has an $r$-regular subgraph.
This problem was proved NP-complete, even for $r=3$, by Chv{\'a}tal (see Page~198 of~\citep{garey:johnson}); 
\citet{stewart1994deciding} extended the NP-completeness result to the case when the input graph $G$ is planar.
\citet{yannakakis1978node} proved a general NP-hardness result for node/edge deletion problems; this result implies, as special cases, that minimizing the number of deleted nodes such that the resulting graph has maximum degree at most $r$ is NP-hard, and similarly, that minimizing the number of deleted edges such that the resulting graph has maximum degree at most $r$ \emph{and is connected} is NP-hard.

When the goal is to achieve $r$-regularity alone (with no connectivity requirements) with edge deletions only, the problem has long been known to be polynomial-time solvable.
Specifically, the $r$-\textsc{factor} problem asks whether we can delete edges from a graph such that each node in the resulting subgraph has degree exactly $r$. This problem can be solved in polynomial time by a reduction to the \textsc{Perfect Matching} problem~\cite{tutte1954short}.
In fact, the reduction introduced in~\cite{tutte1954short} applies to the more general $f$-\textsc{factor} problem, in which a target degree $f_i$ is specified for each node $i$.

\citet{lovasz1970prescribed,lovasz1972factorization} generalizes the result further, defining the \textsc{General Factor} problem, which is closely related to the problem we reduce to. In the \textsc{General Factor} problem, for each node $i$, a set \Degs{i} of target degrees is given, and the question is whether the graph contains a subgraph in which each node has degree in \Degs{i}.\footnote{\citet{lovasz1970prescribed,lovasz1972factorization} also studies the smallest total deviation of node degrees from the sets \Degs{i} that can be achieved by any subgraph, which is not directly relevant to our work.
\citet{lovasz1970prescribed} characterized the positive instances when all of the \Degs{i} are intervals; an algorithm can be extracted from this characterization, as noted in \citep{lovasz1972factorization}.
A more explicitly algorithmic proof was given by \citet{anstee1985algorithmic}, who gave a reduction to a matching problem, similar to our approach here. 
\citet{lovasz1972factorization} generalizes the results from \citep{lovasz1970prescribed}, giving a characterization when each \Degs{i} has gaps of length at most 1; i.e., when $x,y \in \Degs{i}$, no two consecutive numbers in $\SET{x, x+1, \ldots, y}$ can be missing. However, this generalized characterization is not algorithmic. An efficient algorithm for this generalization was obtained by \citet{cornuejols1988general}.
When the \Degs{i} can have gaps of length 2 or more, the problem is NP-complete in general; this was outlined in \citep{lovasz1972factorization} with a reduction from \textsc{3-Colorability} and made more explicit by \citet{cornuejols1988general}.}

The tractable cases we present in Section~\ref{sec:tractable} are closely related to the \textsc{General Factor} problem when the \Degs{i} are intervals.
However, we allow not just for the deletion, but also for the \emph{addition} of edges. For the \textsc{General Factor} problem, it is natural to consider only deletions, because we are looking for subgraphs with a certain property. For the problem of actively modifying a network, however, the addition of edges is equally relevant, and the combination of both adds more complexity, and requires a more complex reduction than, e.g., the one in \citep{anstee1985algorithmic}.

Given the NP-hardness of editing graphs (in particular, through node deletions), one possible approach is to show fixed-parameter tractability in terms of some of the problem's parameters.
Indeed, the parameterized complexity of editing graphs to satisfy certain degree constraints has been studied in a somewhat more recent line of work; see \cite{golovach2017graph,mathieson2012editing} and the references therein. 
Most of this line of work is based on the model in which both vertices and edges can be deleted (and edges added).
As we saw above, the ability to delete vertices makes the problem NP-hard even when the degree constraints are singletons, i.e., a target degree is given for each node; indeed, this is the setting studied in these papers.
The parameter of interest, called $k$, is the total number of changes (deletions and additions) the algorithm is allowed to perform, and the preceding papers obtain algorithms with running time $f(k) \cdot \mbox{poly}(n)$.

\section{Model} \label{sec:model}
\subsection{Binary Networked Public Goods Game}
\label{sec:bnpg-definition}
A \emph{binary networked public goods (BNPG)} game is characterized by the following:
\begin{enumerate}
\item A simple, undirected and loop-free graph $G=(V, E)$ whose nodes $V = \SET{1, 2, \ldots, n}$ are the agents/players, and whose edges $E = \Set{(i,j)}{i,j \in V}$ represent the interdependencies among the players' payoffs.
\item A binary strategy space $\SET{0,1}$ for each player $i$.
  Choosing strategy 1 corresponds to \emph{investing} in a public good,
  while choosing 0 captures non-investment.
  We use \IS{i} to denote the action chosen by player $i$,
  and $\ISV = (\IS{1}, \IS{2}, \ldots, \IS{m})$ for the joint pure strategy profile of all players.
\item For each player $i$, a non-decreasing utility function $U_i(\ISV)$.
\end{enumerate}

As is standard in the literature on networked public goods~\cite{bramoulle:kranton:public-goods}, we assume that each player's utility depends only on (1) her own investment (for which she incurs a cost), and (2) the joint investment of herself and her neighbors in the network, which provides her with a positive externality. Formally, we capture this as follows.
Let $\Neigh[G]{i} = \Set{j}{ (j, i) \in E }$ be the set of neighbors of $i$ in the graph $G$;
then, we can define
$\NeInv[G,\ISV]{i} = \sum_{j \in \Neigh[G]{i}} \IS{j}$
to be the number of $i$'s neighbors who invest under \ISV.
When $G, \ISV$ are clear from the context, we will omit them from this notation.
We assume that each player $i$'s utility function is of the following form:

\begin{align}
\label{eq:utility}
  U_i (\ISV) & = U_i(\IS{i}, \NeInv[\ISV]{i})
  \; = \; \GFunc[i]{\IS{i} + \NeInv[\ISV]{i}} - \InvCost{i} \IS{i}.
\end{align}

The second term ($-\InvCost{i} \IS{i}$) simply captures the cost that $i$ incurs from investing herself.
Each \GFUNC[i] is a non-negative and non-decreasing function (a standard assumption in the public goods games literature), capturing the positive externality that $i$ experiences from her neighbors' (and her own) investments. In many scenarios, \GFUNC[i] will have additional properties, such as being concave or convex, and we discuss such properties in Section~\ref{sec:realization}. 
Observe that each function \GFUNC[i] can be represented using $O(n)$ values, so the entire BNPG game (including the graph structure) can be represented using $O(n^2)$ values. 

We are interested in inducing particular pure strategy Nash Equilibria of the game by modifying the network structure. Pure strategy Nash Equilibria are defined as follows:

\begin{definition} \label{defn:PSNE}
In a BNPG game, a \emph{pure strategy Nash Equilibrium (PSNE)} is an action profile $\ISV \in \SET{0,1}^n$ satisfying 
$U_i (\IS{i}, \NeInv[\ISV]{i}) > U_i(1-\IS{i}, \NeInv[\ISV]{i})$,
  or $U_i (\IS{i}, \NeInv[\ISV]{i}) = U_i(1-\IS{i}, \NeInv[\ISV]{i})$
  and $\IS{i} = 1$,
for every player $i$.
Thus, we are assuming that each player in equilibrium always breaks ties in favor of investing.
\end{definition} 

A given BNPG game may have multiple equilibria.
We will be interested in modifying the graph $G$ to ensure that at least one element of a given set $\mathcal{X}$ is a PSNE.
For notational convenience, we interpret $\mathcal{X}$ both as a set of strategy vectors $\ISV \in \SET{0,1}^n$ and as the subset of investing players $S(\ISV) := \Set{i}{\IS{i} = 1}$, whichever is notationally more convenient.
We are interested in the following classes of PSNE:
\begin{description}
\item[all:] Every player invests, i.e., $\CPSNE = \SET{\SET{1, 2, \ldots, n}}$.
\item[$\bm{=S}$:] Exactly a given set $S$ of players invests (and the other players do not), i.e., $\CPSNE = \SET{S}$. All players investing is the special case $S = \SET{1, \ldots, n}$.
\item[$\bm{\supseteq S}$:] \emph{At least} the set $S$ of players invests; other players may also invest. Here, $\CPSNE = \Set{T}{T \supseteq S}$.
\item[$\bm{\geq r}$:] At least $r$ players invest. Here, $\CPSNE = \Set{T}{\SetCard{T} \geq r}$.
\end{description}

In general, even without the ability to modify $G$, deciding if a BNPG has an equilibrium in \CPSNE is NP-hard.
This can be seen most directly with the following example (see also Section~3.2 of \cite{bramoulle:kranton:public-goods}): 
Each cost is $\InvCost{i} = 1$, and each $\GFunc[i]{z} = 2$ if $z \geq 1$,
and $\GFunc[i]{0} = 0$.
Then, the PSNE are exactly the strategy profiles \ISV in which independent sets of $G$ invest. Therefore, if \CPSNE is the set of all profiles in which at least $r$ players invest (for given $r$), the problem of deciding if the game has a PSNE in \CPSNE is equivalent to the \textsc{Independent Set} problem.

\subsection{Network Modifications}
\label{sec:network-modifications}
The main modeling contribution of our work is to assume that a principal can modify the network $G$ (subject to a budget) with the goal of inducing equilibria from a class \CPSNE.
Formally, an input graph $G' =(V, E')$ on the agents is given.
Each node pair $(i,j)$ has an associated cost $\EdgeCost{(i,j)} = \EdgeCost{(j,i)} \geq 0$.
When $(i,j) \in E'$, this is the cost for removing the edge $(i,j)$ from $G'$, while for $(i,j) \notin E'$, it is the cost for adding the edge $(i,j)$ to $G'$.
When the principal produces a graph $G = (V,E)$, the cost of doing so is
$\sum_{e \in E \triangle E'} \EdgeCost{e} = 
\sum_{e \in E \setminus E'} \EdgeCost{e} + \sum_{e \in E' \setminus E} \EdgeCost{e}$.
The principal is given a budget $B$ not to be exceeded.
The goal is to solve the following problem:
\begin{definition}[Network Design for BNPG] \label{def:main-problem-functions}
Given a BNPG instance, edge costs \EdgeCost{e}, desired PSNE class \CPSNE, and budget $B$, find an edge set $E$ with
$\sum_{e \in E \triangle E'} \EdgeCost{e} \leq B$
such that the BNPG game on $(V,E)$ has at least one PSNE in \CPSNE.
\end{definition}

The general costs \EdgeCost{(i,j)} admit many natural special cases: by setting $\EdgeCost{(i,j)} = \infty$ for $(i,j) \in E'$ (or for $(i,j) \notin E'$), we can prohibit the removal (or addition) of edges. By setting $\EdgeCost{(i,j)} = 0$, we can allow unlimited removal (or addition) of edges. And by setting $\EdgeCost{(i,j)} = 1$, we can simply restrict the number of edges removed/added. 


\subsection{Utility Functions and Induced Degrees} \label{sec:realization}

In the fully general version of the model, the \GFUNC[i] can be arbitrary non-decreasing functions.
We will show that at this level of generality, the Network Design problem is NP-hard for all four classes of PSNE we consider (Theorem~\ref{th:general-g-hardness}).
In most 
scenarios, \GFUNC[i] will have additional properties.
Among the most common of these are:

\begin{description}
\item[Concavity] When \GFUNC[i] is concave, the returns for additional investments of neighbors are diminishing.
  The incentive structures in binary \emph{best-shot games}~\cite{galeotti2010network} can be captured by concave \GFUNC[i]. 

\item[Convexity] When \GFUNC[i] is convex, the returns for additional investments of neighbors are increasing. 

\item[Sigmoid] For many natural scenarios, such as the adoption of innovations~\cite{Zhang16}, the \GFUNC[i] are neither concave nor convex on their entire domain.
  Instead, \GFUNC[i] begins convex, with increasing returns to more investors, but eventually reaches saturation and diminishing returns.

We call such a function \GFUNC[i] a \emph{(generalized) sigmoid function}%
\footnote{There are several definitions for the term sigmoid function, requiring various normalizations, smoothness properties, or even specific functional forms (e.g., logistic). Here, we use the term in the very broad sense.}
if there exists some $\hat{z}$ such that \GFunc[i]{z} is convex on $\Set{z}{z \leq \hat{z}}$ and concave on $\Set{z}{z \geq \hat{z}}$.
Note that sigmoid functions subsume both concave functions (with $\hat{z} = -\infty$) and convex functions (with $\hat{z} = \infty$).
\end{description}

The first useful observation is that we can capture all the relevant information about an agent $i$'s utility function using the set of numbers of investing neighbors that would make $i$ invest.
We call such sets \emph{investment degree sets}\footnote{\citet{lovasz1970prescribed,lovasz1972factorization} and subsequent work call these sets a \emph{prescription}. We deviate from this nomenclature since we consider the addition of edges as well.},
and denote them by \Degs{i}. When \GFUNC[i] is convex/concave/sigmoid, the investment degree sets have particularly nice forms, captured by the following lemma:

\begin{lemma}\label{lemma:D_i}
For every non-decreasing function $\GFUNC[i]: [0, n-1] \rightarrow \R_+$ and cost \InvCost{i}, there exists a unique set $\Degs{i} \subseteq \SET{0, 1, \ldots, n-1}$ such that $\IS{i} = 1$ is a best response to \NeInv{i} if and only if $\NeInv{i} \in \Degs{i}$.
Furthermore, 
\begin{enumerate}
\item When \GFUNC[i] is concave, \Degs{i} is a downward-closed interval.
\item When \GFUNC[i] is convex,  \Degs{i} is an upward-closed interval.
\item When \GFUNC[i] is sigmoid, \Degs{i} is an interval.
\end{enumerate}

Conversely, for every set $\Degs{i} \subseteq \SET{0, 1, \ldots, n-1}$, there exists a non-decreasing function $\GFUNC[i]: [0, n-1] \rightarrow \R_+$ and cost \InvCost{i} such that $\IS{i} = 1$ is a best response to \NeInv{i} if and only if $\NeInv{i} \in \Degs{i}$.
Furthermore,

\begin{enumerate}
\item When \Degs{i} is a downward-closed interval, there exists such a \GFUNC[i] which is concave.
\item When \Degs{i} is an upward-closed interval, there exists such a \GFUNC[i] which is convex.
\item When \Degs{i} is an interval, there exists such a \GFUNC[i] which is sigmoid.
\end{enumerate}
\end{lemma}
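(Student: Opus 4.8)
\medskip\noindent\emph{Proof plan.}
The plan is to route everything through the one-step increments of player $i$'s payoff. Write $\Delta_i(z) := \GFunc[i]{z+1} - \GFunc[i]{z}$ for $z \in \SET{0,\ldots,n-1}$, and set $\phi_i(z) := \GFunc[i]{z} - \InvCost{i}\, z$, which is concave (resp.\ convex, resp.\ sigmoid with the same inflection point $\hat{z}$) exactly when $\GFUNC[i]$ is, since subtracting a linear function preserves each of these properties. Under the tie-breaking convention of Definition~\ref{defn:PSNE}, $\IS{i} = 1$ is a best response to $\NeInv{i}$ investing neighbors precisely when $\GFunc[i]{\NeInv{i}+1} - \InvCost{i} \ge \GFunc[i]{\NeInv{i}}$, i.e.\ when $\Delta_i(\NeInv{i}) \ge \InvCost{i}$, i.e.\ when $\phi_i(\NeInv{i}+1) \ge \phi_i(\NeInv{i})$. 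Thus $\Degs{i} := \Set{z \in \SET{0,\ldots,n-1}}{\phi_i(z+1) \ge \phi_i(z)}$ is well-defined and is plainly the unique set with the required best-response property, which disposes of existence and uniqueness.

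For the forward structural claims I would invoke the standard fact that the one-step increments of a convex function are non-decreasing and those of a concave function are non-increasing. If $\GFUNC[i]$ (hence $\phi_i$) is convex, then $z \in \Degs{i}$ forces $z' \in \Degs{i}$ for every $z \le z' \le n-1$, so $\Degs{i}$ is an upward-closed interval; the concave case is symmetric. For sigmoid $\GFUNC[i]$, take $z_1 < z_2 < z_3$ in $\SET{0,\ldots,n-1}$ with $z_1, z_3 \in \Degs{i}$; I claim $z_2 \in \Degs{i}$, which shows $\Degs{i}$ is an interval. If $z_2 + 1 \le \hat{z}$, all the relevant increments lie in the convex part, and monotonicity of increments together with $z_1 \in \Degs{i}$ gives $\phi_i(z_2+1) - \phi_i(z_2) \ge \phi_i(z_1+1) - \phi_i(z_1) \ge 0$; the case $z_2 \ge \hat{z}$ is the mirror image using $z_3 \in \Degs{i}$ and the concave part. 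The remaining case $z_2 < \hat{z} < z_2+1$ is the delicate one: here $z_1+1 \le z_2$ and $z_2+1 \le z_3$, so monotonicity of chord slopes for the convex function $\phi_i$ on $[z_1,\hat{z}]$, together with $\phi_i(z_1+1) - \phi_i(z_1) \ge 0$, yields $\phi_i(\hat{z}) \ge \phi_i(z_2)$, while monotonicity of chord slopes for the concave function $\phi_i$ on $[\hat{z}, z_3+1]$, together with $\phi_i(z_3+1) - \phi_i(z_3) \ge 0$, yields $\phi_i(z_2+1) \ge \phi_i(\hat{z})$; chaining the two gives $\phi_i(z_2+1) \ge \phi_i(z_2)$, i.e.\ $z_2 \in \Degs{i}$.

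For the converse I would exhibit explicit piecewise-linear functions. Fix $\InvCost{i} = 1$, set $\GFunc[i]{0} = 0$, and let $\GFUNC[i]$ have slope $2$ on $[z,z+1]$ whenever $z \in \Degs{i}$ and slope $0$ otherwise; then $\GFUNC[i]$ is non-negative and non-decreasing and $\Delta_i(z) \ge 1$ holds iff $z \in \Degs{i}$, so this $\GFUNC[i]$ realizes $\Degs{i}$. When $\Degs{i}$ is downward-closed, the slope sequence is $2,\ldots,2,0,\ldots,0$, which is non-increasing, so $\GFUNC[i]$ is concave; when $\Degs{i}$ is upward-closed, the slope sequence is non-decreasing and $\GFUNC[i]$ is convex; and when $\Degs{i} = \SET{a,\ldots,b}$ is a nonempty interval, the slope sequence $0,\ldots,0,2,\ldots,2,0,\ldots,0$ is non-decreasing up to the breakpoint $\hat{z} = b+1$ and non-increasing after it, so $\GFUNC[i]$ is sigmoid with inflection point $b+1$; the degenerate cases where $\Degs{i}$ is empty or all of $\SET{0,\ldots,n-1}$ give a constant or linear function, which is simultaneously convex and concave.

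I expect the main obstacle to be exactly the straddling sub-case $z_2 < \hat{z} < z_2+1$ of the sigmoid forward direction: it is the one point at which the unit-step increment sees both the convex and the concave part of $\GFUNC[i]$, so no single monotonicity statement applies, and bridging $\phi_i(z_2)$ to $\phi_i(z_2+1)$ through the value $\phi_i(\hat{z})$ using membership of $z_1$ and $z_3$ in $\Degs{i}$ is the crux; everything else is either a one-line appeal to convexity/concavity or a direct construction.
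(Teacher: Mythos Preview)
Your proof is correct and follows essentially the same route as the paper: both reduce the best-response condition to the inequality $\Delta_i(z) \ge \InvCost{i}$ on the discrete derivative, read off the structure of \Degs{i} from monotonicity properties of $\Delta_i$, and realize an arbitrary \Degs{i} by the same piecewise-linear construction with slopes $2$ and $0$ and cost $\InvCost{i}=1$.

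The one noteworthy difference is in the sigmoid forward direction. The paper simply asserts that $\Delta \GFUNC[i]$ is non-decreasing on $[0,\hat z]$ and non-increasing on $[\hat z,n-1]$ with its maximum at $\hat z$, and concludes that the superlevel set is an interval. You instead argue directly that \Degs{i} is order-convex via a three-point argument, and you are right to flag the straddling sub-case $z_2 < \hat z < z_2+1$ as the delicate point: when the unit step $[z_2,z_2+1]$ crosses $\hat z$, the increment $\Delta_i(z_2)$ sees both the convex and the concave regime, so the paper's one-line unimodality claim for $\Delta \GFUNC[i]$ actually hides exactly the bridging inequality you spell out through $\phi_i(\hat z)$. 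Your treatment is thus more careful than the paper's at this spot; conversely, the paper's formulation is shorter because it packages the same content into the (true but not entirely immediate) statement that $\Delta \GFUNC[i]$ is unimodal. The auxiliary function $\phi_i$ you introduce is harmless but not needed, since the condition $\phi_i(z+1)\ge\phi_i(z)$ is literally $\Delta_i(z)\ge \InvCost{i}$.
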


\begin{proof}
  We write $\Delta \GFunc[i]{z} = \GFunc[i]{z+1} - \GFunc[i]{z}$ for the discrete derivative. By definition of best responses (recall our tie breaking rule), a player $i$ invests if and only if $U_i(1,\NeInv{i}) \geq U_i(0,\NeInv{i})$, which is equivalent to

\begin{align} \label{eq:iff}
\Delta \GFunc[i]{\NeInv{i}} & \geq \InvCost{i}.
\end{align}

Thus, letting $\Degs{i} = \Set{z}{\Delta \GFunc[i]{z} \geq \InvCost{i}}$, we obtain that $\IS{i} = 1$ is the best response to $z$ iff $z \in \Degs{i}$, proving the first claim. We now consider the three special cases:
\begin{enumerate}
\item When \GFUNC[i] is concave, $\Delta \GFUNC[i]$ is non-increasing. Therefore, whenever $\Delta \GFunc[i]{z} \geq \InvCost{i}$, we also have $\Delta \GFunc[i]{z-1} \geq \InvCost{i}$, meaning that \Degs{i} is downward closed. 
\item When \GFUNC[i] is convex, $\Delta \GFUNC[i]$ is non-decreasing. The rest of the argument is exactly as for the concave case.
\item When \GFUNC[i] is sigmoid, $\Delta \GFUNC[i]$ is non-decreasing on $[0, \hat{z}]$ and non-increasing on $[\hat{z}, n-1]$, with the maximum attained at $\hat{z}$. Therefore, \Degs{i} is an interval.
\end{enumerate}


For the converse, given a set \Degs{i}, define a discrete derivative of $\Delta \GFunc[i]{z} = 2$ if $z \in \Degs{i}$, and $\Delta \GFunc[i]{z} = 0$ if $z \notin \Degs{i}$. Normalizing with $\GFunc[i]{0} = 0$, and setting $\InvCost{i} = 1$, player $i$ will invest iff $z \in \Degs{i}$.
If \Degs{i} is an interval, then \GFUNC[i] will start out as the constant 0, have slope 2 over the interval, and then become flat at the end of the interval. Thus, \GFUNC[i] is a sigmoid. If the interval is downward-closed, then the function is concave; if it is upward-closed, the function is convex.
\end{proof}

The characterization of Lemma~\ref{lemma:D_i} makes precise our intuition behind considering concave/convex \GFUNC[i]. It shows that when \GFUNC[i] is concave, then the \emph{fewer} neighbors invest, the more $i$ is prone to invest. On the other hand, when \GFUNC[i] is convex, then the \emph{more} neighbors invest, the more $i$ is prone to invest.
The primary benefit of Lemma~\ref{lemma:D_i} is that the Network Design problem can now be considered solely in terms of $D_i$ and induced numbers of investing neighbors, rather than utility functions, simplifying the arguments below.

\begin{definition}[Network Design for Degree Sets (NDDS)] \label{def:realize}
The problem \NetDesignP{$\mathcal{P}$}{\CPSNE} is defined as follows:
Given a graph $G'=(V,E')$, investment degree sets \Degs{i} for all players $i$ consistent with a function property $\mathcal{P}$ (such as convexity, concavity, sigmoid, or general), edge costs \EdgeCost{e}, desired PSNE class \CPSNE, and budget $B$, find an edge set $E$ with
$\sum_{e \in E \triangle E'} \EdgeCost{e} \leq B$
such that there exists a set $I \in \CPSNE$ of investing players with
\begin{align*}
 \SetCard{\Neigh[G]{i} \cap I} & \in \Degs{i} \qquad \text{ for all } i \in I, \\
 \SetCard{\Neigh[G]{i} \cap I} & \notin \Degs{i} \qquad \text{ for all } i \notin I.
\end{align*}
Here, $G=(V,E)$ is the modified graph.
\end{definition}

Because the investment degree sets \Degs{i} can be efficiently constructed from the \GFUNC[i] and \InvCost{i} and vice versa, an algorithmic solution or a hardness result for the \NetDesignP{$\mathcal{P}$}{\CPSNE} problem immediately yields the same result for the corresponding Network Design problem from Definition~\ref{def:main-problem-functions}, and vice versa.

\if 0
\dkdelete{
With Definition~\ref{def:realize}, in what follows we describe the conditions to realize the four classes of PSNE as listed in Table~\ref{tab:PSNE}.
The first class in Table~\ref{tab:PSNE} is ``all'', which is to let all players invest.  
This class contains a single PSNE $\XStar = \bm{1}$. 
The set of investing players is $I = V$, which indicates $\SetCard{\mathcal{N}_i \cap I} = \SetCard{\mathcal{N}_i} = n_i$. 
Thus, this class is realized $n_i \in \Degs{i}$ for any player $i$.
The second class is to let a specific group of players invest, for example $S$.  The set of investing players is $I = S$. 
This class is realized if $n_i \in \Degs{i}$ for any player $i \in S$, while $n_i \notin \Degs{i}$ for any player $i \notin S$. 
The third class is to let a set $S$ of players invest, and other players can make arbitrary decisions. 
This class is realized if the set $I$ of investing neighbors is a superset of $S$, formally, $I \supset S$.
The last class prescribes that there are at least $r$ players to invest. 
This class is realized if the cardinality of $I$ is at least $r$, that is $\SetCard{I} \ge r$.
The conditions to realize the four classes of PSNE are summarized in Table~\ref{tab:cond_to_real}:
\begin{table}[H]
\centering
\begin{tabular}{@{}cc@{}}
\toprule
Class of PSNEs & Conditions to realization                                                                              \\ \midrule
all            & $n_i \in \Degs{i}, \forall i \in V$ \\
$S$            & $n_i \in \Degs{i}, \forall i \in S \text{ and } n_i \notin \Degs{i}, \forall i \notin S$ \\
$\ge S$        & $I \supset S$                                                                                          \\
$\ge r$        & $\SetCard{I} \ge r$                                                                                    \\ \bottomrule
\end{tabular}
\caption{Conditions to realize the four classes of PSNE.}
\label{tab:cond_to_real}
\end{table}

Finally, we provide the formal definition of our problem below:
\begin{definition}\label{def:main}
Given a simple, connected, undirected and loop-free graph $G^\prime=(V^\prime, E^\prime)$ and a target class of PSNE $\mathcal{X}$, our problem is to determine if a principal can obtain another graph $G$ by modifying $G^\prime$ within its budget, such that $\mathcal{X}$ can be realized on $G$. 
\end{definition}
}
\fi

\section{Hardness Results}\label{sec:hard_general}
In this section, we prove the hardness results from Table~\ref{tab:complexity}. Hardness arises in different ways for different cases, and we treat them separately.
For all versions, the problem is obviously in NP: a set $I$ of investing agents forms a polynomial-sized witness, and it is easy to verify that (1) for each agent in $I$, investing is a best response, and (2) $I \in \CPSNE$. 

\subsubsection*{\NetDesignP{convex/concave}{\geq r}}
When the goal is to get at least $r$ agents to invest, NP-hardness follows from the discussion in Section~\ref{sec:bnpg-definition}. Even when all edge costs $\EdgeCost{(i,j)}=\infty$, i.e., the principal cannot add or remove any edges, it is NP-hard to decide whether the BNPG has an equilibrium in which at least $r$ agents invest.

\subsubsection*{\NetDesignP{concave}{\supseteq S}}
When the goal is to get a superset of a given set $S$ of agents to invest, 
it is NP-hard to decide whether a suitable equilibrium exists.
Since concave functions are a special case of sigmoid functions, this result implies the hardness result for sigmoid functions as well.

\begin{theorem} \label{thm:concave-supset-hardness}
\NetDesignP{concave}{\supseteq S} is NP-hard.
\end{theorem}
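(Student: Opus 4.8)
The plan is to reduce from a known NP-hard problem, and the structure of the excerpt strongly suggests the right target: editing graphs to meet degree constraints. Since concave $g_i$ correspond exactly to \emph{downward-closed} investment degree sets $\Degs{i} = \SET{0,1,\ldots,t_i}$ (Lemma~\ref{lemma:D_i}), the condition ``$i$ invests is a best response'' becomes simply ``$i$ has at most $t_i$ investing neighbors.'' The $\supseteq S$ objective forces every node in $S$ to invest, so for each $i \in S$ we need its degree \emph{into the investing set} to be at most $t_i$, while every $j \notin S$ that we choose \emph{not} to include must have \emph{more} than $t_j$ investing neighbors (so that non-investment is its best response). This second requirement — preventing outsiders from being ``forced in'' — is the delicate part, and I would try to neutralize it by setting $t_j = n-1$ for all $j \notin S$ (equivalently $\Degs{j} = \SET{0,\ldots,n-1}$, the all-ones function), so that no node outside $S$ ever has an incentive \emph{not} to invest; but since the objective only asks for a \emph{superset} of $S$, we may as well just let all of $V$ invest, reducing the problem to: edit $G'$ within budget $B$ so that in the resulting graph $G$, $d_G(i) \le t_i$ for every $i \in V$. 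That is precisely a degree-upper-bound editing problem.

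Concretely, I would reduce from the NP-hard problem of deciding whether a graph has a subgraph with maximum degree at most $r$ obtainable by deleting at most $k$ edges — or, more cleanly, use the hardness of \textsc{Independent Set} in disguise via a gadget. Actually the cleanest route: reduce directly from \textsc{Independent Set}. Given a graph $H$ and integer $\ell$, I want to know if $H$ has an independent set of size $\ell$. Here is the idea: take the BNPG to have $S = V(H)$ plus a pool of auxiliary ``sink'' nodes, set the input graph $G' = H$ together with no edges to the sinks, forbid edge \emph{additions among $V(H)$} and forbid \emph{deletions} entirely by making those costs large, but allow cheap addition of edges between $V(H)$ and sink nodes; set each $t_i$ so that node $i \in V(H)$ can tolerate its $H$-neighbors only if enough of them are ``absorbed.'' The sinks, having huge $\Degs{}$ of the form $\SET{0,\ldots,n-1}$, are happy to invest regardless. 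One then tunes the budget so that only $\ell$ of the $H$-nodes can be ``saved'' (made to satisfy their degree bound via auxiliary edges or by being the sole non-conflicting choices). The surviving set must be independent, yielding the reduction.

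On reflection, I think the slickest version mirrors exactly the paper's own stated equivalence: show that \NetDesignP{concave}{\supseteq S} is equivalent to deciding whether, by deleting at most $B$ edges (under unit or general costs) and adding none, one can make every vertex of a given graph have degree at most a prescribed bound $t_i$ — and that \emph{this} problem is NP-hard. Hardness of the degree-upper-bound edge-deletion problem can be derived from Chvátal's result (NP-completeness of deciding whether $G$ has a $3$-regular subgraph), or from \citet{yannakakis1978node}: take a $d$-regular graph $G$, set all $t_i = d-1$, and ask whether deleting $k$ edges leaves maximum degree $\le d-1$; asking for exactly $r$-regular subgraphs reduces here by complementation arguments. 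So the steps are: (1) invoke Lemma~\ref{lemma:D_i} to replace concave utilities by downward-closed intervals; (2) argue that for the $\supseteq S$ objective it suffices (WLOG) to seek an equilibrium where all of $V$ invests, so the non-investment constraints vanish; (3) observe the residual problem is exactly edge-editing to degree upper bounds, which is NP-hard by reduction from \textsc{$r$-Regular Subgraph} / \textsc{Independent Set}; (4) check the reduction is polynomial and preserves yes/no instances.

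The main obstacle I anticipate is step (2): verifying rigorously that one can always restrict attention to the all-investing profile without loss of generality. If the input graph already forces some $j \notin S$ to have too many investing neighbors no matter what, adding $j$ to the investing set is fine for the $\supseteq S$ objective — but then $j$'s own degree bound must also be met, which could cascade. Handling this cleanly likely requires either (a) choosing the gadget so every node outside $S$ has $\Degs{} = \SET{0,\ldots,n-1}$ (never a constraint), making the all-invest profile trivially feasible on the equilibrium side and reducing everything to the degree bounds on $S$, or (b) a short argument that an optimal solution can be ``monotonized'' to include all vertices. I would take route (a), since we are constructing the reduction and are free to pick the utility functions.
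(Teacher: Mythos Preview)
Your approach has a genuine gap at step~(3). By taking route~(a) and setting $\Degs{j} = \SET{0,1,\ldots,n-1}$ for every $j \notin S$, you correctly observe that the problem collapses to asking whether the graph can be edited (within budget) so that \emph{every} vertex $i$ satisfies $\Degree[G]{i} \leq t_i$. But this is precisely \NetDesignP{concave}{\text{all}}, which the paper proves is \emph{polynomial-time solvable} in Section~\ref{sec:tractable} (see Table~\ref{tab:complexity}). Your reduction therefore lands in a tractable problem and cannot establish NP-hardness.

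The specific hardness claims you invoke do not apply here. Chv\'atal's result concerns subgraphs with degree \emph{exactly} $r$, not at most $r$; with only upper bounds, the empty subgraph always satisfies the constraints, and the budgeted version is a $b$-matching/factor problem. The Yannakakis result quoted in the related-work discussion requires the extra constraint that the resulting graph be \emph{connected}; without it, minimum-cost edge deletion to meet degree upper bounds is polynomial. Your suggested instance (a $d$-regular graph with all $t_i = d-1$) asks for a minimum edge set touching every vertex, i.e., a minimum edge cover, which is polynomial.

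The actual source of hardness for $\supseteq S$ with concave utilities is exactly the part you engineered away: the combinatorial choice of \emph{which} superset of $S$ to take as the investing set. The paper's proof uses \emph{no} network modification at all (all edge costs are $\infty$) and a singleton target $S = \SET{\hat{u}}$. A hub node $u$ is adjacent to $\hat{u}$ and to every vertex of $H$, with $\Degs{u} = \SET{0,1,\ldots,k}$, while every $v \in V_H \cup \SET{\hat{u}}$ has $\Degs{v} = \SET{0}$. Forcing $\hat{u} \in I$ gives $u$ an investing neighbor, so $u \notin I$; but then $u$ must have at least $k+1$ investing neighbors, hence at least $k$ investors in $V_H$. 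Since each of those has $\Degs{v} = \SET{0}$, they must be pairwise non-adjacent --- an independent set of size $\geq k$. The hardness lives entirely in equilibrium selection, not in edge editing, and your route~(a) removes precisely that degree of freedom.
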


\begin{proof}
The reduction is from \textsc{Independent Set}. Given a graph $H=(V_H,E_H)$ and an integer $k$, the problem is to decide if $H$ contains an independent set of size (at least) $k$, i.e., a set $T \subseteq V_H$ such that no pair in $T$ is connected by an edge. 
For the reduction, we add two nodes $u, \hat{u}$. $u$ has an edge to $\hat{u}$ as well as to all nodes in $V_H$; no other edges are added to $E_H$. The degree sets are $\Degs{v} = 0$ for all $v \in V_H \cup \SET{\hat{u}}$, $\Degs{u} = \SET{0, 1, \ldots, k}$, and $S = \SET{\hat{u}}$.
We set all edge addition/removal costs to $\infty$.

If $H$ has an independent set $T$ of size at least $k$ (without loss of generality, $T$ is inclusion-wise maximal), then setting $I = T \cup \SET{\hat{u}}$ gives us a superset of $S$. No $v \in I$ has a neighbor in $I$, so all of them invest. Each $v \in V_H \setminus I$ has at least one neighbor in $I$, so none of them invest. Finally, $u$ has at least $k+1$ neighbors in $I$, so $u$ does not invest.

Conversely, if a superset $I \supseteq S=\SET{\hat{u}}$ invests, then $u \notin I$. Therefore, $u$ must have at least $k+1$ neighbors in $I$; in particular, $u$ has at least $k$ neighbors in $V_H$. Because all of those neighbors are in $I$, their degrees within $I$ must be $0$, so they must form an independent set of size at least $k$.
This completes the proof of NP-hardness.
\end{proof}

\subsubsection*{\NetDesignP{general}{\text{all}}}
For fully general functions \GFUNC[i], even the ``easiest'' goal --- getting \emph{all} agents to invest --- is NP-hard. This immediately implies NP-hardness of the other cases (getting exactly or at least a subset $S$ or at least $r$ agents to invest), since their special cases $S=V$ or $r=n$ are hard.

NP-hardness follows from the NP-hardness of the \textsc{General Factor} problem, established formally by \citet{cornuejols1988general} (and outlined by \citet{lovasz1972factorization}). \citet[Page 2]{cornuejols1988general} showed the following result:

\begin{theorem} \label{thm:general-hardness-cornuejols}
  Given an undirected graph\footnote{\citet{cornuejols1988general} shows the hardness result to hold even when $G$ is planar, bipartite, and all nodes have degree 2 or 3.} $G$ and sets \Degs{i} for all nodes $i$, it is NP-hard to decide if $G$ has a subgraph in which each node $i$ has degree in \Degs{i}.
\end{theorem}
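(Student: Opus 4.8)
The statement is essentially Cornu\'ejols's theorem, so my plan is to reproduce a streamlined version of his reduction; the cleanest route is a polynomial-time reduction from \textsc{3-SAT} (the historical route of \citet{lovasz1972factorization} instead goes through \textsc{3-Colorability}, but the gadget structure is analogous). Given a 3-CNF formula $\phi$ with variables $x_1,\dots,x_n$ and clauses $C_1,\dots,C_m$, I would construct a graph $G$ together with degree sets \Degs{i} such that $G$ has a subgraph in which each node $i$ has degree in \Degs{i} if and only if $\phi$ is satisfiable. The selected subgraph will encode a truth assignment together with a certificate that every clause is satisfied.

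The construction uses three gadgets. A \emph{switch} for variable $x$: a node $s_x$ with two incident edges $p_x$ (``$x$ true'') and $q_x$ (``$x$ false'') and $\Degs{s_x}=\SET{1}$, so exactly one of the two edges is selected, fixing $x$'s value. A \emph{fan-out}, needed because a single edge $p_x$ cannot reach all clauses in which $x$ occurs positively: route $p_x$ into a small binary tree of auxiliary nodes, each of degree $3$ with degree set $\SET{0,3}$; since $\SET{0,3}$ forces an all-or-nothing choice on its three incident edges, any signal entering the tree is copied to all of its leaves, one leaf per positive occurrence of $x$ (and symmetrically a tree rooted at $q_x$ for the negative occurrences). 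This is the only place where a degree set with a gap of length $\ge 2$ (namely $\SET{0,3}$, missing $1$ and $2$) is used --- exactly as one expects, since degree sets whose gaps have length at most $1$ admit polynomial-time algorithms (\citet{cornuejols1988general}). Finally, a \emph{clause} gadget for $C_j=(\ell_1\vee\ell_2\vee\ell_3)$: a node $c_j$ whose three incident edges are the corresponding fan-out leaves, with $\Degs{c_j}=\SET{1,2,3}$. Because the fan-out is rigid, the degree of $c_j$ in any valid subgraph equals the number of literals of $C_j$ made true, so the constraint at $c_j$ is satisfiable iff at least one literal of $C_j$ is true.

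For correctness one argues both directions. If $\phi$ has a satisfying assignment, select $p_x$ (resp.\ $q_x$) exactly when $x$ is true (resp.\ false), propagate the choice through every fan-out tree, and retain every clause edge incident to a satisfied clause; a direct check shows that every node's degree lands in its prescribed set. Conversely, from any feasible subgraph, read off the value of $x$ from whichever of $p_x,q_x$ is present (exactly one is, by $\Degs{s_x}=\SET{1}$); the $\SET{0,3}$ constraints force each fan-out tree to be wholly included or wholly excluded, so literal values are globally consistent across clauses, and $\Degs{c_j}=\SET{1,2,3}$ forces every clause to contain a true literal. Hence $\phi$ is satisfiable.

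The main obstacle I anticipate is establishing the \emph{rigidity} of the fan-out: I must show that in \emph{every} feasible subgraph --- not merely the intended ones --- all occurrence-edges of a given literal are simultaneously present or absent, so that a subgraph cannot ``cheat'' by making a literal true at one clause and false at another. This rigidity is precisely what a length-$2$ gap provides, but getting the degree bookkeeping of the intermediate tree nodes right, and cleanly handling boundary cases (variables with one or zero occurrences, clauses with repeated literals, so that some $\Degs{c_j}$ shrink to $\SET{1,2}$ or $\SET{1}$), is the delicate part. If one additionally wants the stronger form quoted in the footnote (hardness already for planar, bipartite graphs with all degrees $2$ or $3$), the gadgets must be laid out planarly within a degree-$3$ budget, which requires more careful engineering but introduces no essentially new idea.
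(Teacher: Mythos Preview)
The paper does not prove this theorem; it is quoted as a known result of \citet{cornuejols1988general} (with an earlier outline by \citet{lovasz1972factorization}) and used as a black box to derive Theorem~\ref{th:general-g-hardness}. So there is no ``paper's own proof'' to compare against --- your proposal is filling in what the paper takes as given.

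As a standalone proof, your reduction from \textsc{3-SAT} is sound. The $\SET{0,3}$ fan-out nodes do give the rigidity you need: once the edge into the root of a literal's tree is fixed, a top-down induction forces every edge of that tree to match it, so literal values are globally consistent; and $\Degs{c_j}=\SET{1,2,3}$ then encodes exactly ``at least one true literal.'' Two small points to tighten. First, in the forward direction, your phrase ``retain every clause edge incident to a satisfied clause'' is off: the fan-out propagation already determines which clause edges are present (those of true literals), so there is no separate retention step --- just check that each $c_j$ then has degree in $\SET{1,2,3}$. Second, make the tree construction explicit enough that every auxiliary node genuinely has degree exactly~$3$ for an arbitrary occurrence count $k$; a path of $k-1$ nodes with the input edge at one end and output edges distributed so each node hits degree~$3$ does the job, and the degenerate cases $k\in\SET{0,1}$ are handled by omitting the tree (direct edge or no edge), as you note. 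With those clarifications, the argument goes through.
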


It is easy to see that the \textsc{General Factor} problem is a special case of our NDDS problem, by setting $\EdgeCost{e} = 0$ for all edges $e$ in the input graph $G$ (i.e., allowing arbitrary edge deletions), and setting $\EdgeCost{e} = \infty$ for all edges $e$ not in the input graph. We therefore obtain as a corollary of Theorem~\ref{thm:general-hardness-cornuejols}.

\begin{theorem} \label{th:general-g-hardness}
\NetDesignP{general}{\text{all}} is NP-hard, even when the cost of edge removals is 0 and no edges can be added.\footnote{Due to the specificity of the result of \citet{cornuejols1988general}, NP-hardness continues to hold even when the input graphs are restricted to be planar and bipartite with all nodes having degree 2 or 3.}
\end{theorem}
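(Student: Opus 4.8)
The plan is to observe that \NetDesignP{general}{\text{all}} directly subsumes the \textsc{General Factor} problem, so the statement is an immediate corollary of Theorem~\ref{thm:general-hardness-cornuejols}. Given a \textsc{General Factor} instance consisting of an undirected graph $G=(V,E_G)$ (with $\SetCard{V}=n$) and sets $\Degs{i} \subseteq \SET{0, 1, \ldots, n-1}$ for all $i$, I would build an NDDS instance as follows: the input graph is $G' := G$; the investment degree sets are the given $\Degs{i}$; the edge costs are $\EdgeCost{e} = 0$ for every $e \in E_G$ and $\EdgeCost{e} = \infty$ for every $e \notin E_G$; the budget is $B = 0$; and the target PSNE class is \textbf{all}, i.e., $\CPSNE = \SET{V}$. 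By the converse direction of Lemma~\ref{lemma:D_i}, each $\Degs{i}$ is realized by some non-decreasing $\GFUNC[i]$ together with a cost $\InvCost{i}$, so this is a legitimate instance of \NetDesignP{general}{\text{all}}, and the construction is clearly polynomial-time.

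Next I would argue the two instances are equivalent. Because edge additions have infinite cost and the budget is finite, every feasible modified graph $G=(V,E)$ satisfies $E \subseteq E_G$; conversely, every such $E$ is feasible, since edge removals are free. For the class \textbf{all}, the only candidate set of investing players is $I = V$, so the NDDS feasibility conditions collapse to the single requirement that $\SetCard{\Neigh[G]{i}} \in \Degs{i}$ for every $i \in V$ (the ``$\notin \Degs{i}$'' conditions are vacuous since $V \setminus I = \emptyset$). Hence the NDDS instance is a yes-instance iff there is an edge subset $E \subseteq E_G$ in which every node's degree lies in $\Degs{i}$, which is precisely the assertion that $G$ has the desired \textsc{General Factor} subgraph. (A vertex dropped from a \textsc{General Factor} subgraph corresponds exactly to a vertex of degree $0$ in $E$, which is permitted iff $0 \in \Degs{i}$; so the spanning-edge-subset formulation used by NDDS and the general-subgraph formulation of \textsc{General Factor} coincide.)

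Combining this equivalence with Theorem~\ref{thm:general-hardness-cornuejols} yields NP-hardness of \NetDesignP{general}{\text{all}}. Moreover, the reduction leaves $G$ untouched, so the restriction in Theorem~\ref{thm:general-hardness-cornuejols} to planar, bipartite graphs with all degrees in $\SET{2,3}$ is inherited verbatim; and since the instance uses zero removal costs and forbids all additions, the hardness holds already in that special case, as claimed. Finally, the other three PSNE classes contain \textbf{all} as the special case $S = V$ (resp.\ $r = n$), so the same instance establishes their hardness too. There is essentially no serious obstacle here beyond bookkeeping: the one point that deserves care is verifying that the NDDS condition ``there exists $I \in \CPSNE$ with the prescribed induced degrees'' genuinely degenerates to the plain degree-constraint problem when $\CPSNE = \SET{V}$, and that allowing a \textsc{General Factor} ``subgraph'' to drop vertices does not change the problem once each $\Degs{i}$ is permitted to contain $0$.
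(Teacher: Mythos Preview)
Your proposal is correct and matches the paper's main-text argument essentially verbatim: the paper likewise observes that \textsc{General Factor} is the special case of \NetDesignP{general}{\text{all}} obtained by setting $\EdgeCost{e}=0$ for $e\in E'$ and $\EdgeCost{e}=\infty$ otherwise, and then invokes Theorem~\ref{thm:general-hardness-cornuejols}. (For completeness, the paper also supplies in the appendix an alternative self-contained reduction from \textsc{Vertex Cover}, but that is not the primary proof.)
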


In the conference version of this paper --- which was written prior to us learning about the work on the \textsc{General Factor} problem (in particular, \citep{lovasz1972factorization} and \citep{cornuejols1988general}) --- we included a self-contained proof of Theorem~\ref{th:general-g-hardness}, via a reduction from \textsc{Vertex Cover}. For completeness, we include this proof in the appendix.

\subsubsection*{\NetDesignP{convex}{\supseteq S}} \label{sec:convex-supset}

Finally, we show that \NetDesignP{convex}{\supseteq S} is NP-hard.
In contrast with the proof of Theorem~\ref{th:general-g-hardness}, the hardness result for convex functions has to use edge costs other than 0 and $\infty$. The reason is that for convex sets, higher degrees are always preferable.
Consequently, a principal will never remove edges (even if they are free to remove).
On the other hand, if edge additions are free, the principal's optimal strategy is clearly to make $G$ the complete graph. Either this will induce all nodes in $S$ to invest, or no graph $G$ will.

\begin{theorem} \label{th:convex-supS-hardness}
  The problem \NetDesignP{convex}{\supseteq S} is NP-hard,
  even when $\EdgeCost{(i,j)} = 1$ for all $(i,j) \notin E'$,
  and $\EdgeCost{e} = 0$ for all $e \in E'$.
\end{theorem}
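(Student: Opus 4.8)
The plan is to reduce from \textsc{Clique}: given a graph $H=(V_H,E_H)$ and an integer $k$, decide whether $H$ has a clique on $k$ vertices. By Lemma~\ref{lemma:D_i}, a convex instance is just an assignment of a \emph{threshold} $t_i$ to each node, with $i$ willing to invest exactly when at least $t_i$ of its neighbors invest. Under the cost structure in the statement, the principal may delete edges for free and add a set $F$ of new edges at cost $|F|\le B$. I first note that the non-investment conditions are essentially vacuous here: all thresholds in the construction will be at least $1$, and any node we want to keep out of the equilibrium set $I$ will automatically sit strictly below its threshold (its only potential investing neighbors are too few), so it has no incentive to invest. Hence the instance collapses to the question: is there a set $I\supseteq S$ and a set $F$ of at most $B$ non-edges inside $I$ such that in $G'[I]\cup F$ every $i\in I$ has degree at least $t_i$? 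Writing $c(I)$ for the minimum number of edges that must be added inside $G'[I]$ to meet all thresholds of $I$, we ask whether $\min_{I\supseteq S}c(I)\le B$. Membership in NP was already observed at the start of Section~\ref{sec:hard_general}.

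The construction introduces, for each $v\in V_H$, a \emph{vertex node} $v$; for each $e=(u,w)\in E_H$, an \emph{edge node} $y_e$ adjacent in $G'$ to the two vertex nodes $u$ and $w$ together with a private pool of \emph{filler} nodes; and a single \emph{demand node} $d$ adjacent to all edge nodes, with $S=\{d\}$. The demand node gets threshold $t_d=\binom{k}{2}$, forcing $I$ to contain at least $\binom{k}{2}$ edge nodes. Each $y_e$ gets a threshold calibrated against the size of its filler pool so that $y_e$ can be satisfied only if \emph{both} of its vertex-node endpoints lie in $I$, and so that substituting for a missing endpoint would require too many edge additions to stay within budget. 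Each vertex node $v$ is given a threshold exceeding its guaranteed in-$I$ degree by a fixed amount, arranged (via a private gadget) so that including $v$ forces a fixed number of added edges incident to $v$ that cannot be charged against any other node's deficiency. Setting $B$ to be exactly the total additive cost of including $k$ vertex nodes then makes $\min_{I\supseteq S}c(I)\le B$ equivalent to: the $\ge\binom{k}{2}$ forced edge nodes are supported by at most $k$ vertex nodes, i.e., those $k$ vertices span $\binom{k}{2}$ edges and hence form a clique. Forward: a $k$-clique $K$ yields $I=\{d\}\cup K\cup\{y_e:e\subseteq K\}\cup(\text{fillers})$, satisfiable after adding precisely the prescribed within-budget set of edges. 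Reverse: unwinding the threshold inequalities shows any within-budget feasible $(I,F)$ pins down $\binom{k}{2}$ edge nodes whose endpoints form a $k$-clique.

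The step I expect to be the main obstacle is exactly this gadget calibration, and in particular ruling out ``cheating'' modifications. Because convex thresholds are only lower bounds, a denser graph is never worse for the principal, so the budget is the sole mechanism that can force the equilibrium set to have a prescribed size; I therefore have to design the filler pools, thresholds, and private vertex-node gadgets so that simultaneously: (i) no single cheap edge addition lets an edge node forgo one of its true endpoints; (ii) the deficiency created by including a vertex node is genuinely private and cannot be discharged by an edge shared with another vertex node's deficiency (otherwise $q$ included vertex nodes would cost roughly $q/2$ rather than $q$, destroying the exact count that encodes ``$=k$''); and (iii) the minimum-addition quantity $c(I)$ has a clean enough closed form to be matched against $B$ exactly. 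Making all three hold at once while keeping every threshold realizable by a convex \GFUNC[i] through Lemma~\ref{lemma:D_i} is the delicate part; the remaining pieces (NP membership, the forward construction, and the free handling of the non-investment conditions) are routine.
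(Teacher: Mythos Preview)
Your proposal is a plan, not a proof: the entire reduction hinges on the ``gadget calibration'' that you explicitly flag as the main obstacle and then do not carry out. Every quantitative parameter---the filler-pool sizes, the edge-node thresholds, the ``private gadget'' attached to each vertex node, and the budget $B$---is left as ``calibrated'' or ``arranged'' without a single concrete value. Since the correctness argument for the reverse direction must be a tight counting argument against $B$, nothing can be verified until those numbers are written down. As it stands there is no construction to check.

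There are also two concrete leaks in the sketch that your list of desiderata does not close. First, nothing forces $d$'s threshold to be met by \emph{edge nodes}: under the stated cost structure the principal may add edges from $d$ to any node in $I$ (vertex nodes, fillers, whatever), each at cost $1$, so ``$t_d=\binom{k}{2}$'' does not by itself force $\binom{k}{2}$ edge nodes into $I$. Second, you never say what a ``private'' vertex-node gadget is. In the convex setting every added edge inside $I$ can help both endpoints, so an edge between two deficient vertex nodes discharges two units of deficiency at cost $1$; your point~(ii) names this danger but gives no mechanism to block it. Without that, the intended equation ``$q$ vertex nodes cost exactly $q\cdot(\text{const})$'' fails, and with it the exact match against $B$ that encodes ``$=k$''.

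For contrast, the paper's reduction (also from \textsc{$k$-Clique}) sidesteps all of this gadgetry. It takes $S$ to be a freshly added clique $V'$ on $nk$ nodes, gives every node the \emph{same} upward-closed set $\{nk+k-1,\ldots,nk+n\}$, and sets $B=nk^2$. Each $v\in V'$ already has $nk-1$ investing neighbors inside $V'$ and needs exactly $k$ more, which must be bought as new edges into $V_H$; the budget is exactly $nk\cdot k$, so every $V'$-node gets precisely $k$ such edges and no slack remains. A short concavity argument on $|S|$ then pins the common neighborhood in $V_H$ to a $k$-clique. The point is that making $S$ itself a large homogeneous clique lets a single global budget do the counting, eliminating the need for per-node private gadgets and the leak through $d$.
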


\begin{proof}
We prove NP-hardness by a reduction from the \textsc{$k$-Clique} problem. In an instance of \textsc{$k$-Clique}, we are given a graph $H=(V_H, E_H)$ and a positive integer $k$, and asked if $H$ has a clique of size at least $k$, i.e., a subset $S \subseteq V_H$ of at least $k$ nodes such that $(u,v) \in E$ for all $u, v \in S, u \neq v$.
From $H, k$, we construct an instance of \NetDesignP{convex}{\supseteq S}, consisting of a graph $G'=(V, E')$, investment degree sets \Degs{i} for each node $i \in V$, costs \EdgeCost{(i,j)} for edge addition/removal, and a budget $B$.

The graph $G'$ consists of $H$, with a node-disjoint clique on $nk$ nodes added. We call the set of new nodes $V'$, and write $V = V' \cup V_H$. Thus, the new graph has $nk+n$ nodes. The investment degree of every node $i$ is $\Degs{i} = \SET{nk+k-1, \ldots, nk+n}$. The cost of adding any non-existing edge $(i,j)$ is $\EdgeCost{(i,j)} = 1$, and the cost for removing any existing edge $e$ is $\EdgeCost{e} = 0$. The budget is $B=nk^2$. Notice that the \Degs{i} by Lemma~\ref{lemma:D_i} indeed correspond to convex functions \GFUNC[i]. Finally, the goal is to get a superset of $V'$ to invest.

First, assume that $H$ has a clique $S$ of $k$ nodes. Let $E$ consist of all edges of $G'$, plus a complete bipartite graph between $V'$ and $S$. This added bipartite graph contains $kn \cdot k = k^2n$ edges, so it satisfies the budget constraint. It is now immediate that each node in $V' \cup S$ has degree at least $nk+k-1$. The nodes in $V_H \setminus S$ have degree at most $n-1$. As a result, setting $I = V' \cup S$ satisfies Definition~\ref{def:realize}.

Conversely, let $E$ be a set of edges with $\SetCard{E \setminus E'} \leq nk^2$, and $I \subseteq V'$ a set of vertices such that in the graph $(V, E)$, each node $v \in I$ has at least $nk+k-1$ neighbors in $I$, and each node $v \notin I$ has at most $nk+k-2$ neighbors in $I$.
First, because each node $v \in V'$ started out with degree $nk-1$ and has at least $nk+k-1$ neighbors in $I$, $E \setminus E'$ must contain at least $k$ incident edges for each such $v$. And because $E'$ already contained a clique on $V'$, these edges must be between $V'$ and $V_H$, so none of them are incident on two nodes of $V'$. Therefore, $E \setminus E'$ contains exactly $k$ incident edges on each $v \in V'$.

Let $S$ be the set of neighbors of $V'$ in $V_H$.
First, $\SetCard{S} \geq k$, because each node $v \in V'$ is adjacent to $k$ nodes in $V_H$.
Next, we claim that $I = V' \cup S$. First, all nodes in $S$ must be in $I$. The reason is that each $v \in V'$ has degree exactly $nk+k-1$, so if even one of its neighbors were not in $I$, it couldn't have the required $nk+k-1$ neighbors in $I$. Second, no node in $V_H \setminus S$ can be in $I$, because its degree is at most $n-1 < nk+k-1$.

For every node $v \in S$, let $k(v)$ be the number of neighbors of $v$ in $S \cup V'$ in the graph $G=(V,E)$. 
Because $S \subseteq I$, we can lower-bound $k(v) \geq nk+k-1$.
On the other hand, $\sum_{v \in S} k_v \leq \SetCard{S} \cdot (\SetCard{S}-1) + nk^2$, and because the minimum $k_v$ is at most the average, we get that
$nk+k-1 \leq \min_{v \in S} k_v \leq (\SetCard{S}-1) + \frac{nk^2}{\SetCard{S}}$.
Rearranging this inequality gives us that
$\SetCard{S} (nk + k - \SetCard{S}) \leq nk^2$.
The left-hand side is a strictly concave function of $\SetCard{S}$, and therefore attains its minimum at one of its endpoints $\SetCard{S} \in \SET{k, n}$. At $\SetCard{S} = k$, the inequality holds with equality, while at $\SetCard{S} = n$, it is violated. Therefore, $\SetCard{S} = k$ is the only feasible solution of the inequality.
Because each $v \in S$ only has $nk$ neighbors in $V'$, and none of its neighbors in $V_H \setminus S$ are in $I$, each $v \in S$ must have $k-1$ neighbors in $S$. In other words, $S$ is a clique of size $k$ in $H$.
\end{proof}


\section{Tractable Cases}\label{sec:tractable}
\newcommand{\ENodeA}[2]{\ensuremath{y^+_{#1,#2}}\xspace} 
\newcommand{\ENodeR}[2]{\ensuremath{y^-_{#1,#2}}\xspace}
\newcommand{\ANode}[2]{\ensuremath{x^{+}_{#1,#2}}\xspace}
\newcommand{\RNode}[2]{\ensuremath{x^{-}_{#1,#2}}\xspace}
\newcommand{\ASet}[1]{\ensuremath{X^{+}_{#1}}\xspace}
\newcommand{\RSet}[1]{\ensuremath{X^{-}_{#1}}\xspace}
\newcommand{\ASNum}[1]{\ensuremath{\SetCard{\ASet{#1}}}\xspace}
\newcommand{\RSNum}[1]{\ensuremath{\SetCard{\RSet{#1}}}\xspace}
\newcommand{\ANum}[1]{\ensuremath{k^+_{#1}}\xspace}
\newcommand{\RNum}[1]{\ensuremath{k^-_{#1}}\xspace}
\newcommand{\ANumP}[1]{\ensuremath{\hat{k}^+_{#1}}\xspace}
\newcommand{\RNumP}[1]{\ensuremath{\hat{k}^-_{#1}}\xspace}
\newcommand{\Sh}[1]{\ensuremath{\sigma_{#1}}\xspace}
\newcommand{\AMNode}[2]{\ensuremath{z^+_{#1,#2}}\xspace}
\newcommand{\RMNode}[2]{\ensuremath{z^-_{#1,#2}}\xspace}
\newcommand{\AMSet}[1]{\ensuremath{Z^+_{#1}}\xspace}
\newcommand{\RMSet}[1]{\ensuremath{Z^-_{#1}}\xspace}

In this section, we give polynomial-time algorithms for the corresponding cases in Table~\ref{tab:complexity}.
At the core of our algorithms lies a construction for the
\NetDesignP{sigmoid}{all} problem, which is based on a reduction to
the \textsc{Minimum-Cost Perfect Matching} problem. 
This reduction is a significant generalization of Tutte's reduction
for finding a subgraph with a given degree sequence.
The \textsc{Perfect Matching} problem is polynomial-time solvable by utilizing the Blossom Algorithm proposed by~\citeauthor{edmonds1965paths}~\shortcite{edmonds1965paths}. 
The remaining cases are either special cases of \NetDesignP{sigmoid}{all}, or can be reduced to \NetDesignP{sigmoid}{all} fairly directly.
We remark that when no edges can be added, an efficient algorithm was given implicitly as part of the characterization by \citet{lovasz1970prescribed,lovasz1972factorization}. An explicit algorithm based on a reduction to a matching problem was given by \citet{anstee1985algorithmic}. The fact that we also allow edge additions increases the complexity of the problem significantly.


\subsection{Tractability of \NetDesignP{sigmoid}{all}} \label{sec:construction}

\Omit{

\begin{figure*}[t!]
\centering
\setlength{\tabcolsep}{1cm}
\begin{tabular}{lr}
	\begin{tikzpicture}[thick,
	  every node/.style={circle, scale=0.5},
	  YNode/.style={fill=myblue, minimum size=0.5cm},
	  EdgeNode/.style={fill=mygold, minimum size=0.5cm},
	  graphNode/.style={draw, fill=myred, minimum size=0.5cm},
	  every fit/.style={rectangle, draw, inner sep=0.4cm,text width=1cm}, shorten >= 0.1cm,shorten <= 0.1cm
	]

	\node[draw, fill=mygreen, minimum size=0.5] at (0, 0) (v) {$v$};
	\node[graphNode] at (-1.5,0) (1) {$1$};
	\node[graphNode] at (0, 1.5) (2) {$2$};
	\node[graphNode] at (1.5, 0) (3) {$3$};
	\node[graphNode] at (0, -1.5) (4) {$4$};

	\draw[-] (v) -- node[above] {$e_1$} (1);
	\draw[-] (v) -- node[right] {$e_2$} (2);
	\draw[-] (v) -- node[above] {$e_3$} (3);
	\draw[-] (v) -- node[right] {$e_4$} (4);
	\end{tikzpicture} & 
	\begin{tikzpicture}[thick,
	  every node/.style={circle, scale=0.65},
	  YNode/.style={fill=myblue, minimum size=0.3cm},
	  EdgeNode/.style={fill=mygold, minimum size=0.5cm},
	  graphNode/.style={draw, fill=myred, minimum size=0.5cm},
	  XNode/.style={fill=myred, minimum size=0.5cm},
	  every fit/.style={rectangle, draw, inner sep=0.4cm,text width=1cm}, shorten >= 0.1cm,shorten <= 0.1cm
	]
	\begin{scope}[start chain=going below,node distance=6mm]
		\foreach \i in {1,2}
			\node[YNode,on chain] (y^i_\i) [label={[label distance=0.001in]left: \RNode{v}{\i} }] {};
	\end{scope}
	\node[draw, fit=(y^i_1) (y^i_2), label=above: \RSet{v}] {};

	\begin{scope}[xshift=2cm,start chain=going below,node distance=5mm]
		\foreach \i in {1,2,3,4}
			\node[XNode,on chain] (p_e_\i_i) [label={30}: {\ENodeR{e_\i}{v}}] {};
	\end{scope}


	\foreach \i in {1,2,3,4}
		\foreach \j in {1,2}
			\draw[-] (p_e_\i_i) -- (y^i_\j);
	\end{tikzpicture}
\end{tabular}
\caption{\textbf{Left}: an example graph where $\Degs{v} = \SET{2}$ and $\Delta d_v = 2$. \textbf{Right}: the subgraph of the Tutte construction associated with $v$.}
\label{fig:tutte}
\end{figure*}
}

\begin{figure*}[t!]
\centering
\setlength{\tabcolsep}{1cm}
\begin{tabular}{lr}
	\begin{tikzpicture}[thick,
	  every node/.style={circle, scale=0.7},
	  graphNode/.style={draw, fill=myred, minimum size=0.5cm},
	  every fit/.style={rectangle, draw, inner sep=0.6cm,text width=0.8cm}, shorten >= 0.1cm,shorten <= 0.1cm
	]

	\node[graphNode, fill=mygreen] at (0, -4.5) (i) {$i$};
	\begin{scope}[yshift=-4cm, xshift=1cm, start chain = going below, node distance=5mm]
		\foreach \i in {1,2}
			\node[graphNode, on chain, minimum size=0.08cm] (\i) {$\i$} {};
	\end{scope}
	\begin{scope}[yshift=-2cm, xshift=-2cm, start chain = going below, node distance=5mm]
		\foreach \i in {3,4,5,6,7,8,9}
			\node[graphNode, on chain, minimum size=0.08cm] (\i) {$\i$} {};
	\end{scope}	

	\draw[-] (i) -- node[above] {$e_1$} (1);
	\draw[-] (i) -- node[above] {$e_2$} (2);

	\foreach \i in {3,4,5,6,7,8,9}
		\draw[dashed] (i) --node[above] {\large $e'_\i$} (\i);
	\end{tikzpicture}
&	
	\begin{tikzpicture}[thick,
	  every node/.style={circle, scale=0.7},
	  XNode/.style={fill=myblue, minimum size=0.5cm},
	  YNode/.style={fill=mygreen, minimum size=0.5cm},
	  ZNode/.style={fill=mygold, minimum size=0.5cm},
	  EdgeNode/.style={fill=myred, minimum size=0.5cm},
	  every fit/.style={rectangle, draw, inner sep=1.2cm,text width=0.2cm}, shorten >= 0.1cm,shorten <= 0.1cm
	]
	\begin{scope}[xshift=-3cm, start chain = going below, node distance=6mm]
		\foreach \i in {3,4,5,6,7,8,9}
			\node[EdgeNode, on chain, minimum size=0.08cm] (p_e'_\i_i) [label={[label distance=0.01mm]120: \small {\ENodeA{e'_\i}{i}}}] {};
	\end{scope}	

	\begin{scope}[xshift=-1cm, start chain = going below, node distance=6mm]
		\foreach \i in {1,2,3,4,5,6}
			\node[XNode, on chain, minimum size=0.08cm] (x_\i) [label=below: \small {\ANode{i}{\i}}] {};
	\end{scope}	
	\node[draw, fit=(x_1) (x_6), label=above: \ASet{i}] {};

	\foreach \i in {3,4,5,6,7,8,9}
		\foreach \j in {1,2,3,4,5,6} 
			\draw[-] (p_e'_\i_i) -- (x_\j);

	\begin{scope}[xshift=1.5cm, yshift=-0.8cm, start chain = going below, node distance=5mm]
		\foreach \i in {1,2}
			\node[YNode, on chain, minimum size=0.08cm] (y_\i) [label=30: \small {\RNode{i}{\i}}] {};
	\end{scope}	
	\node[draw, fit=(y_1) (y_2), inner sep=0.5cm, label=above: \RSet{i}] {};
	\foreach \i in {1,2}
		\foreach \j in {1,2,3,4,5,6}
			\draw[-] (y_\i) -- (x_\j);

	\node[ZNode, minimum size=0.08cm,label={above: $z_1$}] at (2, -3) (z_1) {};
	\node[ZNode, minimum size=0.08cm,label={above: $z_2$}] at (3, -3) (z_2) {};
	\node[ZNode, minimum size=0.08cm,label={below: $z_3$}] at (2, -4.5) (z_3) {};
	\node[ZNode, minimum size=0.08cm,label={below: $z_4$}] at (3, -4.5) (z_4) {};
	\node[draw, fit=(z_1) (z_4), inner sep=0.8cm, label=below: \AMSet{i}] {};

	\foreach \i in {1,2,3,4}
		\foreach \j in {1,2,3,4,5,6}
			\draw[-] (z_\i) -- (x_\j);
	\foreach \i in {1,2,3,4}
		\foreach \j in {1,2,3,4}
			\draw[-] (z_\i) -- (z_\j);


	\begin{scope}[xshift=3cm, yshift=-0.5cm, start chain = going below, node distance=5mm]
		\foreach \i in {1,2}
			\node[EdgeNode, on chain, minimum size=0.08cm] (p_e_\i_i) [label={[label distance=0.01mm]30: \small {\ENodeR{e_\i}{i}}}] {};
	\end{scope}	
	\foreach \i in {1,2}
		\foreach \j in {1,2}
			\draw[-] (p_e_\i_i) -- (y_\j);
	\end{tikzpicture}
\end{tabular}
\caption{Example to illustrate our generalized construction. \textbf{Left}: An example graph where \Degree[G']{i} = 2, \Degs{i}=\SET{4,5,6}, and $n=10$; \textbf{Right}: the subgraph $H(i)$ associated with $i$.}
\label{fig:construction}
\end{figure*}


  
Consider an instance of \NetDesignP{sigmoid}{all}, consisting of a graph $G'=(V, E')$, investment degree sets \Degs{i}, costs \EdgeCost{(i,j)} for edge addition/removal, and a budget $B$.
The principal wants to modify $G'$ to $G=(V, E)$ (at total cost at most $B$), such that all agents invest in a PSNE of the corresponding game.
We construct an instance of the weighted perfect matching problem on a graph $H = (V_H, E_H)$ and show that the principal has a graph modification of cost at most $B$ available iff $G$ has a perfect matching of total cost at most $B$.
Our construction generalizes~\citeauthor{tutte1954short}~\shortcite{tutte1954short}.
We begin by describing Tutte's construction, and then present our generalization.

Tutte's reduction applies to the special case when edges can only be removed, and furthermore, each investment degree set $\Degs{v} = \SET{d_v}$ is a singleton, called the \emph{desired degree} of $v$.
Each node $v$ must have exactly $\Delta d_v := \Degree[G']{v} - d_v$ of its edges removed.
To encode this, Tutte's construction adds a node set \RSet{v} of
$\Delta d_v$ nodes $\RNode{v}{j}, j=1, \ldots, \Delta d_v$.
Furthermore, it adds two nodes \ENodeR{e}{v}, \ENodeR{e}{u} for every edge $e=(u,v)$. These two nodes are connected to each other, and to all nodes in their respective sets \RSet{v}, \RSet{u}.
It is clear that the construction takes polynomial time. 
Any perfect matching has to match all of the nodes in each \RSet{v}, capturing exactly the edges incident on $v$ to be deleted. Because for each edge $e=(u,v)$, both \ENodeR{e}{v} and \ENodeR{e}{u} must be matched, either they are matched to each other (encoding that the edge is not deleted), or they must \emph{both} be matched with nodes from the corresponding \RSet{v} and \RSet{u} sets.
It is now straightforward  that the new graph has a perfect matching iff the desired degree sequence can be obtained by edge removals.
Edge removal costs can be assigned to the edges between \RSet{v} and the \ENodeR{e}{v}.



Because the addition of edges corresponds to the removal of edges in the complement graph, a practically identical construction can be used directly if the goal is only to \emph{add}, rather than remove edges.
However, in \NetDesignP{sigmoid}{all}, the principal can both add and remove edges.
Furthermore, the investment degree sets \Degs{i} can be intervals containing multiple values.
This necessitates significant extensions to Tutte's construction.

We now describe our generalized construction, where a graph $H=(V_H, E_H)$ is constructed from $G'$.
For every agent $i \in V$, the degree in $G'$ is \Degree[G']{i}, and the degree set is $\Degs{i}=\SET{L_i, \ldots, R_i}$, where $L_i$ (resp., $R_i$) is the minimum (resp., maximum) of \Degs{i}.
If any set \Degs{i} is empty, then the instance clearly has no solution, and this is easy to diagnose. From now on, we assume that $\Degs{i} \neq \emptyset$ for all $i$.
At the core of the construction is the union of the Tutte construction for both additions and removals of edges.
Thus, for each edge $e = (i,i') \in E'$ (a candidate for removal), we add two nodes \ENodeR{e}{i} and \ENodeR{e}{i'} with an edge between them;
similarly for each node pair $e'=(i,i') \notin E'$ (a candidate for addition), we add two nodes \ENodeA{e'}{i} and \ENodeA{e'}{i'} with an edge between them.

Next, we describe the node gadget for a node $i$.
An illustrative example
is shown in Figure~\ref{fig:construction}.
We add a set \ASet{i} of $\min(R_i, n-\Degree[G']{i}-1)$ nodes \ANode{i}{j} (blue nodes in Figure~\ref{fig:construction}), corresponding to additions of edges,
and a set \RSet{i} of $\min(n-L_i-1, \Degree[G']{i})$ nodes \RNode{i}{j} (green nodes in Figure~\ref{fig:construction}), corresponding to edge removals.
These are hard upper bounds on the number of possible edge additions/removals: for \ASNum{i}, the first term arises because even if all existing edges were deleted, no more than $R_i$ new edges can be safely added; the second term is because there are only $n-\Degree[G']{i}-1$ potential edges for addition. The justification is similar for \RSNum{i}.
As in Tutte's construction, we add an edge between each node \ANode{i}{j} and each \ENodeA{e'}{i}.
Similarly, we add an edge between each node \RNode{i}{j} and each \ENodeR{e}{i}.
Finally, we add a complete bipartite graph between \RSet{i} and \ASet{i}.

As in Tutte's construction, including an edge between \ANode{i}{j} and \ENodeA{e'}{i} in a matching corresponds to adding the edge $e'$ (increasing the degree of $i$), and including the edge $(\RNode{i}{j}, \ENodeR{e}{i})$ corresponds to removing the edge $e$, decreasing the degree of $i$.
Because no other edges are incident on \ENodeA{e'}{i}, \ENodeR{e}{i}, for any edge $e=(i,i')$, either \ENodeR{e}{i} is matched with \ENodeR{e}{i'}, or both are matched with nodes from \RSet{i} (resp., \RSet{i'}); similarly for the \ENodeA{e}{i} nodes.
The complete bipartite graph between \RSet{i} and \ASet{i} allows us to encode that adding one fewer edge and removing one fewer edge has the same effect on $i$'s degree as adding and removing one more edge. 

We now expand the gadget to encode the set \Degs{i}.
The intuition for the generalized gadget is the following: if \ANum{i} nodes in \ASet{i} are matched with nodes \ENodeA{e'}{i}, and \RNum{i} nodes in \RSet{i} are matched with nodes \ENodeR{e}{i}, then the new degree of $i$ is $\Degree[G']{i} + \ANum{i} - \RNum{i}$. We want to force this number to be in $\Degs{i} = [L_i, R_i]$ for every perfect matching. If we let $\ANumP{i} = \ASNum{i} - \ANum{i}$ and $\RNumP{i} = \RSNum{i} - \RNum{i}$ be the number of nodes in \ASet{i}, \RSet{i} that are matched differently (i.e., not with \ENodeA{e'}{i}, \ENodeR{e}{i}), then the necessary/sufficient condition can be expressed as
$\RNumP{i} - \ANumP{i} \in [L_i-(\Degree[G']{i}+\ASNum{i}-\RSNum{i}), R_i - (\Degree[G']{i}+\ASNum{i}-\RSNum{i})]$.
Furthermore, notice that our gadget will only need to work if at least
one of $\RNumP{i}, \ANumP{i}$ is 0, since the complete bipartite graph
between \RSet{i}, \ASet{i} can always be used to ensure this
condition.

Let $\Sh{i} = \Degree[G']{i}+\ASNum{i}-\RSNum{i}$.
A case distinction on the possible cases of the minimum in the definitions of \ASNum{i}, \RSNum{i} shows that we always have $L_i \leq \Sh{i} \leq R_i$.
Therefore, $L_i - \Sh{i} \leq 0 \leq R_i - \Sh{i}$.
We generate two more node sets \AMSet{i}, \RMSet{i}.
\AMSet{i} consists of $\Sh{i} - L_i$ nodes \AMNode{i}{j}, and \RMSet{i} consists of $R_i - \Sh{i}$ nodes \RMNode{i}{j}.
There is a complete bipartite graph between \AMSet{i} and \ASet{i}, as well as between \RMSet{i} and \RSet{i}. In addition, there is a complete graph on the union of all of the \AMSet{i} and \RMSet{i}, for all $i$. If the total number of nodes in the construction is odd, then we add one more node $\hat{z}$ and connect it to all nodes in all of the \AMSet{i} and \RMSet{i}.
The \AMNode{i}{j}, \RMNode{i}{j} are there to match any otherwise unmatched nodes \ANode{i}{j}, \RNode{i}{j}. Whichever ones of them are not needed can be matched with each other and with $\hat{z}$.

Finally, for every edge $e=(i,i') \in G'$, we assign a cost of $\EdgeCost{e}/2$ to the edges $(\RNode{i}{j}, \ENodeR{e}{i})$ and $(\RNode{i'}{j}, \ENodeR{e}{i'})$ (for all $j$); similarly, for every edge $e'=(i,i') \notin G'$, we assign a cost of $\EdgeCost{e'}/2$ to the edges $(\ANode{i}{j}, \ENodeA{e'}{i})$ and $(\ANode{i'}{j}, \ENodeA{e'}{i'})$ (for all $j$). All other edges have cost 0. The cost bound for the perfect matching is the given budget $B$.
The correctness of this reduction is captured by the following theorem:

\begin{theorem} \label{thm:polynomial-correctness}
The graph $H$ has a perfect matching of total cost at most $B$ if and only if there is an edge modification $E$ of the input graph $G'$ such that in $(V,E)$, the degrees of all nodes $i$ are in their respective investment sets \Degs{i}.
\end{theorem}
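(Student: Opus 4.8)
The plan is to prove the biconditional by exhibiting two explicit, cost-preserving translations: one turning a perfect matching of $H$ into a valid edge modification of $G'$, and one turning a valid modification into a perfect matching. In both directions the organizing principle is that the total cost of a matching equals the total cost of the corresponding modification, because the only matching edges carrying nonzero weight are the pairs $(\RNode{i}{j},\ENodeR{e}{i})$ and $(\ANode{i}{j},\ENodeA{e'}{i})$, and each removed (resp.\ added) edge $e$ is charged twice, once at each endpoint, for a total of $\EdgeCost{e}$.

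\textbf{From a matching to a modification.} The first observation is the structural one inherited from Tutte: the only neighbors of $\ENodeR{e}{i}$, for $e=(i,i')\in E'$, are $\ENodeR{e}{i'}$ and the nodes of $\RSet{i}$ (symmetrically for $\ENodeA{e'}{i}$ and $\ASet{i}$). Hence in any perfect matching $M$, for each $e=(i,i')\in E'$ either $\ENodeR{e}{i}$ is matched to $\ENodeR{e}{i'}$, or both endpoints' gadget nodes are matched into their own sets $\RSet{i},\RSet{i'}$. I define $E$ to keep $e$ in the first case and to add $e'\notin E'$ in the (analogous) second case; this is consistent across the two endpoints. Writing $\ANum{i}$ (resp.\ $\RNum{i}$) for the number of nodes of $\ASet{i}$ (resp.\ $\RSet{i}$) matched to an edge gadget, distinctness of gadget nodes shows the degree of $i$ in $(V,E)$ equals $\Degree[G']{i}+\ANum{i}-\RNum{i}$. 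To see this lies in $\Degs{i}=[L_i,R_i]$, I track how the remaining $\ANumP{i}=\ASNum{i}-\ANum{i}$ nodes of $\ASet{i}$ and $\RNumP{i}=\RSNum{i}-\RNum{i}$ nodes of $\RSet{i}$ are matched: each goes either across the complete bipartite graph between $\RSet{i}$ and $\ASet{i}$, or into $\AMSet{i}$, resp.\ $\RMSet{i}$. Letting $a_i$ (resp.\ $r_i$) count the $\ASet{i}$-to-$\AMSet{i}$ (resp.\ $\RSet{i}$-to-$\RMSet{i}$) matched pairs, the bipartite edges cancel to give $\ANumP{i}-\RNumP{i}=a_i-r_i$ with $0\le a_i\le\SetCard{\AMSet{i}}=\Sh{i}-L_i$ and $0\le r_i\le\SetCard{\RMSet{i}}=R_i-\Sh{i}$. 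Substituting $\ANum{i}=\ASNum{i}-\ANumP{i}$, $\RNum{i}=\RSNum{i}-\RNumP{i}$, and the identity $\Sh{i}=\Degree[G']{i}+\ASNum{i}-\RSNum{i}$ into the degree expression yields $\Degree[G']{i}+\ANum{i}-\RNum{i}=\Sh{i}+(\RNumP{i}-\ANumP{i})\in[L_i,R_i]$, using $L_i\le\Sh{i}\le R_i$.

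\textbf{From a modification to a matching.} Conversely, given $E$ with $\sum_{e\in E\triangle E'}\EdgeCost{e}\le B$ and $\SetCard{\Neigh[G]{i}}\in\Degs{i}$ for all $i$, I build $M$ greedily. For each kept edge $e=(i,i')\in E'$ match $\ENodeR{e}{i}$ to $\ENodeR{e}{i'}$, and for each non-added non-edge match the corresponding $\ENodeA{}{}$ pair; for each removed edge match $\ENodeR{e}{i}$ and $\ENodeR{e}{i'}$ to fresh nodes of $\RSet{i},\RSet{i'}$, and symmetrically for added edges into $\ASet{\cdot}$. The sets are big enough: with $d_i:=\Degree[G']{i}+\ANum{i}-\RNum{i}\in\Degs{i}$ the new degree, one checks $\ANum{i}\le d_i\le R_i$ and $\ANum{i}\le n-\Degree[G']{i}-1$, hence $\ANum{i}\le\ASNum{i}$, and the mirror bounds give $\RNum{i}\le\RSNum{i}$. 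Then I match the leftover $\ANumP{i}$ nodes of $\ASet{i}$ and $\RNumP{i}$ nodes of $\RSet{i}$ against each other through the complete bipartite graph until one side is exhausted; whichever side overflows, say $\RSet{i}$ by $\RNumP{i}-\ANumP{i}$ nodes, is absorbed into $\RMSet{i}$, and the required inequality $\RNumP{i}-\ANumP{i}\le\SetCard{\RMSet{i}}$ simplifies (using the same substitutions) to $d_i\le R_i$; the overflow-on-the-$\ASet{i}$-side case reduces to $d_i\ge L_i$. After all this, every vertex is matched except a subset of $\bigcup_i(\AMSet{i}\cup\RMSet{i})$ together with $\hat z$ (if present), which induces a clique in $H$ and has even cardinality since $\SetCard{V_H}$ is even and all other vertices were matched in pairs; so it has a perfect matching. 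The cost bookkeeping matches the forward direction, giving $\mathrm{cost}(M)\le B$.

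\textbf{Main obstacle.} The conceptual structure is an essentially mechanical bidirectional simulation; the delicate part is purely arithmetic: verifying that the gadget sizes $\ASNum{i},\RSNum{i},\SetCard{\AMSet{i}},\SetCard{\RMSet{i}}$ --- defined via the truncating minima and the shift $\Sh{i}$ --- are simultaneously large enough to realize every target degree in $[L_i,R_i]$ and tight enough that no perfect matching can push the degree outside $[L_i,R_i]$. Concretely this is the pair of inequality chains that reduce, on one side, to $L_i\le\Sh{i}\le R_i$ (which I take from the construction's case analysis), and on the other to $d_i\in[L_i,R_i]$ (which is the hypothesis/conclusion). Everything else --- consistency of the keep/add labeling across edge endpoints, the degree accounting, and the final parity argument that the residual $\AMSet{}/\RMSet{}/\hat z$ clique always admits a perfect matching --- is routine once those bounds are in place.
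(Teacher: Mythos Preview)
Your proposal is correct and follows essentially the same approach as the paper's proof: both directions rely on the Tutte-style consistency of the edge gadgets, the bipartite $\ASet{i}$--$\RSet{i}$ cancellation, the overflow into $\AMSet{i}/\RMSet{i}$ bounded by $\Sh{i}-L_i$ and $R_i-\Sh{i}$, and the final parity/clique cleanup. The only cosmetic differences are that you present the two directions in the opposite order and introduce the explicit counters $a_i,r_i$ where the paper argues the same inequalities $(\ASNum{i}-\ANum{i})-(\RSNum{i}-\RNum{i})\le\Sh{i}-L_i$ and its mirror directly; the arithmetic and the use of $L_i\le\Sh{i}\le R_i$ are identical.
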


\begin{proof}
  First, we assume that there is an edge set $E \subseteq V \times V$ such that $\sum_{e \in E' \triangle E} \EdgeCost{e} \leq B$, and in the graph $G=(V,E)$, every node $i$ has degree $\Degree[G]{i} \in \Degs{i}$.
  We define a perfect matching $M$ in $H$.

  For edges $e=(i,i') \in E \cap E'$, the matching includes the edge $(\ENodeR{e}{i}, \ENodeR{e}{i'})$; similarly, for edges $e'=(i,i') \notin E, e' \notin E'$, it contains the edge $(\ENodeA{e'}{i}, \ENodeA{e'}{i'})$.

  Now focus on one node $i$. Let \ANum{i}, \RNum{i} be the numbers of edges that were added to (resp., removed from) $i$, i.e., the numbers of edges incident on $i$ in $E \setminus E'$ and $E' \setminus E$. Let $e'_1, \ldots, e'_{\ANum{i}}$ be an enumeration of the added edges (in arbitrary order), and $e_1, \ldots, e_{\RNum{i}}$ an enumeration of the removed edges in arbitrary order.
  For each $e'_j$, the matching $M$ includes the edge $(\ANode{i}{j}, \ENodeA{e'_j}{i})$; similarly, for each $e_j$, $M$ includes the edge $(\RNode{i}{j}, \ENodeR{e_j}{i})$.
  Doing this for all $i$ ensures that all nodes \ENodeA{e'}{i}, \ENodeR{e}{i} are matched, and the total cost of all edges is exactly $\sum_{e \in E' \triangle E} \EdgeCost{e} \leq B$. This cost will not change by the inclusion of later edges, since they all have cost 0.

  Next, let $m_i := \min(\ASNum{i}-\ANum{i}, \RSNum{i}-\RNum{i})$. Notice that $m_i \geq 0$, because our definition of \ASNum{i}, \RSNum{i} ensured that no edge set $E$ with $\Degree[G]{i} \in \Degs{i}$ could add/remove more than \ASNum{i} (resp., \RSNum{i}) edges. We next add a perfect matching of $m_i$ edges $(\ANode{i}{\ANum{i}+j}, \RNode{i}{\RNum{i}+j})$ for $j=1, \ldots, m_i$. At this point, at least one of the sets \ASet{i}, \RSet{i} is completely matched. For the remaining description, assume that \RSet{i} is fully matched --- the other case is symmetric. Now, there are

  \begin{align*}
    \ASNum{i}-\ANum{i}-m_i
    & \leq \ASNum{i} - \RSNum{i} + (\RNum{i} - \ANum{i})
 \\ & = (\Sh{i} - \Degree[G']{i}) + (\RNum{i} - \ANum{i})
 \\ & = \Sh{i} - \Degree[G]{i}
  \end{align*}

  unmatched nodes in \ASet{i}. Because $\Degree[G]{i} \in \Degs{i}$, it must satisfy $\Degree[G]{i} \geq L_i$; therefore, because \AMSet{i} contains $\Sh{i}-L_i \geq \Sh{i} - \Degree[G]{i}$ nodes, it has enough nodes to perfectly match the remaining nodes of \ASet{i} --- we add such a perfect matching.
  Finally, we add a perfect matching on the unmatched nodes of all \AMSet{i}, \RMSet{i} (and $\hat{z}$) --- this is possible, because $H$ contains a complete graph on these nodes, the total number of nodes in $H$ is even, and the number of nodes matched so far is (by definition of a matching) even.
  Thus, we have shown that $H$ contains a perfect matching of the desired cost.

  For the converse direction, we assume that $H$ contains a perfect matching $M$ of cost at most $B$. Define edge sets $E^+ = \Set{e'=(i,j) \notin E'}{(\ENodeA{e}{i}, \ENodeA{e}{i'}) \notin M}$ and $E^- = \Set{e=(i,j) \in E'}{(\ENodeR{e}{i}, \ENodeR{e}{i'}) \notin M}$. That is, $E^+$ consists of the edges for which addition was encoded in the Tutte reduction part, and $E^-$ of the edges for which removal was encoded in the Tutte reduction part. Let $E = E' \cup E^+ \setminus E^-$. Because $M$ is a perfect matching, it must include edges of the form $(\ANode{i}{j}, \ENodeA{e'}{i}), (\ANode{i'}{j}, \ENodeA{e'}{i'})$ for all edges $e'=(i,i') \in E^+$, and edges of the form $(\RNode{i}{j}, \ENodeR{e}{i}), (\RNode{i'}{j}, \ENodeR{e}{i'})$ for all edges $e=(i,i') \in E^-$. In particular, the total cost of $E \triangle E'$ is exactly $B$.

  It remains to show that in the graph $G = (V,E)$, each node $i$ has degree $\Degree[G]{i} \in \Degs{i}$. Let \ANum{i} be the number of edges in $E^+$ incident on $i$, and \RNum{i} the number of edges in $E^-$ incident on $i$. Then, because the \ENodeA{e'}{i} for $e' \in E^+$ are \emph{not} matched to \ENodeA{e'}{i'}, they must be matched to some \ANode{i}{j}; similarly, the \ENodeR{e}{i} for $e \in E^-$ are matched to some \RNode{i}{j}. In particular, this means that $\ANum{i} \leq \ASNum{i}, \RNum{i} \leq \RSNum{i}$.
  Furthermore, because \ASet{i} and \RSet{i} are completely matched, and they can only be matched with each other and \ENodeA{e'}{i} and \AMSet{i} (\ENodeR{e}{i}, or \RMSet{i}, respectively), we infer that
  $(\ASNum{i} - \ANum{i}) - (\RSNum{i} - \RNum{i}) \leq \Sh{i} - L_i$ and
  $(\RSNum{i} - \RNum{i}) - (\ASNum{i} - \ANum{i}) \leq R_i - \Sh{i}$.
  Substituting the definition of \Sh{i}, these inequalities rearrange to
  $L_i \leq \Degree[G']{i} + \ANum{i} - \RNum{i} = \Degree[G]{i}$ and
  $R_i \geq \Degree[G']{i} + \ANum{i} - \RNum{i} = \Degree[G]{i}$.
  Thus, we have shown that $\Degree[G]{i} \in [L_i,R_i]$, so the degree constraint for $i$ is met.
  Since this holds for all $i$, the proof is complete.
\end{proof}

The reduction clearly runs in polynomial time (and is in fact fairly straightforward), and the Minimum-Cost Perfect Matching problem is known to be solvable in polynomial time \cite{edmonds1965paths}. Thus, we obtain a polynomial-time algorithm for the \NetDesignP{sigmoid}{all} problem, as claimed.
Because convex and concave functions are special cases of sigmoid functions, \NetDesignP{convex/concave}{all} are also polynomial-time solvable. 




\subsection{Tractability of \NetDesignP{sigmoid}{=S}}

Finally, we leverage the algorithm from Section~\ref{sec:construction} for the more general problem \NetDesignP{sigmoid}{=S}.

Consider a hypothetical solution $G=(V,E)$. Then, for every node $i \notin S$, we must have $\SetCard{\Neigh[G]{i} \cap S} \notin \Degs{i}$. Edges between node pairs $i,i' \notin S$ do not matter. Similarly, because exactly the nodes of $S$ are supposed to invest, for the purpose of investment decisions of nodes $i \in S$, edges to nodes not in $S$ do not matter.
Thus, as a first step, an algorithm can add/remove edges between $S$
and $V \setminus S$ of minimum total cost to ensure that
$\SetCard{\Neigh[G]{i} \cap S} \notin \Degs{i}$ for all $i \notin
S$. This can be accomplished easily node by node: when considering
node $i$, either the principal will add $R_i+1-\Degree[G']{i}$ edges
or remove $\Degree[G']{i}-(L_i-1)$ edges.
In both cases, the minimum-cost edges incident on $i$ will be chosen.
If these additions/removals exceed the budget $B$, then no solution is possible. Otherwise, they will be performed, and the budget updated to the remaining budget.

After the removal of these edges, the agents in $V \setminus S$ are irrelevant; the sole goal is to alter the edges within $S$ at minimum cost to meet the degree constraints. This is an instance of the problem \NetDesignP{sigmoid}{all} on the induced graph $G'[S]$, which can be solved using the algorithm from Section~\ref{sec:construction}. Thus, we have proved the following theorem:

\begin{theorem}\label{th:sigmoid-S-poly}
The problem \NetDesignP{sigmoid}{=S} is polynomial-time solvable.
\end{theorem}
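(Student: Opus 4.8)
The plan is to reduce \NetDesignP{sigmoid}{=S} to \NetDesignP{sigmoid}{all} by a polynomial-time preprocessing step that peels off everything except the edges lying inside $S$. By Lemma~\ref{lemma:D_i}, each \Degs{i} is an interval $[L_i,R_i]$, and, following Definition~\ref{def:realize}, a modified graph $G=(V,E)$ realizes $\CPSNE=\SET{S}$ precisely when (i) $\SetCard{\Neigh[G]{i}\cap S}\in\Degs{i}$ for every $i\in S$ and (ii) $\SetCard{\Neigh[G]{i}\cap S}\notin\Degs{i}$ for every $i\notin S$. The key structural observation I would exploit is that these conditions almost decouple across the three types of potential edges: condition (i) involves only edges with both endpoints in $S$; condition (ii) involves only edges with exactly one endpoint in $S$ --- and, since the non-$S$ endpoint of such an edge is unique, the constraints (ii) for distinct $i\notin S$ touch pairwise disjoint edge sets; and edges with both endpoints outside $S$ influence neither condition, so (as all costs are nonnegative) an optimal modification leaves them alone.

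\textbf{Step 1 (cut edges).} First I would fix up the edges between $S$ and $V\setminus S$, node by node. For each $i\notin S$ in isolation, let $d_i:=\SetCard{\Neigh[G']{i}\cap S}$. If $d_i\notin\Degs{i}$ already, leave $i$'s cut edges untouched at cost $0$. Otherwise, to escape $[L_i,R_i]$ one may either delete the $d_i-(L_i-1)$ cheapest present edges from $i$ into $S$ (available only when $L_i\ge 1$), or add the $(R_i+1)-d_i$ cheapest absent pairs from $i$ into $S$ (available only when $R_i+1\le\SetCard{S}$); take the cheaper available option. If for some $i\notin S$ no option is available, there is no solution. Let $C$ be the sum of these per-node minimum costs; because the edge sets involved are pairwise disjoint and reaching any count strictly farther from the interval only costs more, $C$ is exactly the minimum cost of any cut-edge modification satisfying (ii). If $C>B$, report infeasibility.

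\textbf{Step 2 (interior edges).} After Step 1 the cut edges and the edges outside $S$ are fixed, and the residual task is to modify only the edges inside $S$, within budget $B-C$, so that $\Degree[G[S]]{i}\in\Degs{i}$ for all $i\in S$. This is verbatim an instance of \NetDesignP{sigmoid}{all} on the induced graph $G'[S]$ (same \Degs{i}, same edge costs, budget $B-C$), which is solved in polynomial time by the construction of Section~\ref{sec:construction}, whose correctness is Theorem~\ref{thm:polynomial-correctness}. Gluing the output of Step 2 onto the cut edges fixed in Step 1 yields a feasible modified graph of total cost $\le B$; conversely, anything the procedure returns satisfies (i) and (ii) and hence realizes $\CPSNE=\SET{S}$.

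\textbf{Where the work is.} The reduction itself is essentially bookkeeping once the decoupling is justified, so the one point that needs care is the correctness of that decomposition --- specifically, that the greedy per-node choice in Step 1 is globally optimal (it is, since the relevant edge sets are pairwise disjoint and disjoint from the interior of $S$, so the total cost is additive and each term is minimized independently) and that edges leaving $S$ genuinely cannot help satisfy condition (i) (they cannot, since a node $i\in S$ invests iff $\SetCard{\Neigh[G]{i}\cap S}\in\Degs{i}$, which ignores all neighbors outside $S$). Given this, both completeness and soundness follow immediately from Theorem~\ref{thm:polynomial-correctness} and the polynomial-time solvability of Minimum-Cost Perfect Matching.
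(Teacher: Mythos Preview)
Your proposal is correct and follows essentially the same approach as the paper: decouple the cut edges (handled greedily per node $i\notin S$) from the interior edges (reduced to \NetDesignP{sigmoid}{all} on $G'[S]$), with edges inside $V\setminus S$ irrelevant. If anything, you are slightly more careful than the paper in spelling out why the per-node greedy on cut edges is globally optimal (disjoint edge sets) and in handling the boundary cases ($d_i\notin\Degs{i}$ already, or neither escape direction available).
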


Again, since convex and concave functions are special cases of sigmoid functions, the tractability of \NetDesignP{convex/concave}{=S} follows from Theorem~\ref{th:sigmoid-S-poly}.

\section{Conclusion}
The problem of modifying elements of a game structure to achieve desired outcomes has a long history and interest in both economics and computing, with mechanism design the classic variation.
In mechanism design, a key design parameter is the payment scheme for the players.
The somewhat more recent literature on market design is often focused on settings where payments are infeasible, and aims to design market structure, such as the rules of the matching markets.
An even more recent thread considers the problem of designing signals that modify information available to the players, thereby inducing particular desirable outcomes.
We suggest considering a fourth element of the game in settings where strategic dependencies among players are mediated by a network: the design of the network structure.
Such design decisions are commonly inherently constrained by an already existing network, and we specifically consider the simplest and most natural design action: adding and removing links.
Additionally, to elucidate both the process and the associated algorithmic mechanics, we further delve deeply into a study of network design for networked public goods games, with the goal of inducing desired pure strategy equilibrium outcomes.
The significance of our work is thus both in proposing a novel framework for designing the rules of encounter specific to networked game theoretic scenarios, and elucidating the algorithmic complexity of this problem in the particular context of networked public goods games.

Our work provides an initial step, but leaves open a number of research questions.
First, our focus on adding and removing edges with an additive addition/removal cost clearly limits the scope of applicability.
In general, one would encounter numerous complications.
For example, if the means for adding edges is through the design of events, then the cost would be incurred for adding a collection of edges (i.e., organizing an event), rather than adding each edge independently.
Indeed, one could consider a broad space of reasonable cost functions that generalize additivity, such as submodular costs.
Second, the problem of inducing equilibria through network modifications is interesting far more broadly than just networked public goods games.
For example, such network design issues arise in congestion games.
Third, we only considered the issue of inducing pure strategy Nash equilibria.
It is, of course, natural to study other equilibrium concepts, such as mixed-strategy equilibria and correlated equilibria.

\section*{Acknowledgments}

This work was partially supported by the National Science Foundation
(grants IIS-1903207 and IIS-1903207) and Army Research Office (MURI grant
W911NF1810208 and grant W911NF1910241).

\balance

\bibliographystyle{ACM-Reference-Format}
\bibliography{davids-bibliography/names-abbrv,davids-bibliography/conferences,davids-bibliography/bibliography,davids-bibliography/publications,paper}


\begin{thebibliography}{37}


\ifx \showCODEN    \undefined \def \showCODEN     #1{\unskip}     \fi
\ifx \showDOI      \undefined \def \showDOI       #1{#1}\fi
\ifx \showISBNx    \undefined \def \showISBNx     #1{\unskip}     \fi
\ifx \showISBNxiii \undefined \def \showISBNxiii  #1{\unskip}     \fi
\ifx \showISSN     \undefined \def \showISSN      #1{\unskip}     \fi
\ifx \showLCCN     \undefined \def \showLCCN      #1{\unskip}     \fi
\ifx \shownote     \undefined \def \shownote      #1{#1}          \fi
\ifx \showarticletitle \undefined \def \showarticletitle #1{#1}   \fi
\ifx \showURL      \undefined \def \showURL       {\relax}        \fi
\providecommand\bibfield[2]{#2}
\providecommand\bibinfo[2]{#2}
\providecommand\natexlab[1]{#1}
\providecommand\showeprint[2][]{arXiv:#2}

\bibitem[\protect\citeauthoryear{Amelkin and Singh}{Amelkin and Singh}{2019}]%
        {amelkin2019fighting}
\bibfield{author}{\bibinfo{person}{Victor Amelkin} {and}
  \bibinfo{person}{Ambuj~K. Singh}.} \bibinfo{year}{2019}\natexlab{}.
\newblock \showarticletitle{Fighting opinion control in social networks via
  link recommendation}. In \bibinfo{booktitle}{\emph{Proc. 25th KDD}}.
  \bibinfo{pages}{677--685}.
\newblock


\bibitem[\protect\citeauthoryear{Anstee}{Anstee}{1985}]%
        {anstee1985algorithmic}
\bibfield{author}{\bibinfo{person}{Richard~P. Anstee}.}
  \bibinfo{year}{1985}\natexlab{}.
\newblock \showarticletitle{An Algorithmic Proof of Tutte's $f$-Factor
  Theorem}.
\newblock \bibinfo{journal}{\emph{J. Algorithms}}  \bibinfo{volume}{6}
  (\bibinfo{year}{1985}), \bibinfo{pages}{112--131}.
\newblock


\bibitem[\protect\citeauthoryear{Bramoull\'{e} and Kranton}{Bramoull\'{e} and
  Kranton}{2007}]%
        {bramoulle:kranton:public-goods}
\bibfield{author}{\bibinfo{person}{Yann Bramoull\'{e}} {and}
  \bibinfo{person}{Rachel Kranton}.} \bibinfo{year}{2007}\natexlab{}.
\newblock \showarticletitle{Public goods in networks}.
\newblock \bibinfo{journal}{\emph{J. Econ. Theory}} \bibinfo{volume}{135},
  \bibinfo{number}{1} (\bibinfo{year}{2007}), \bibinfo{pages}{478--494}.
\newblock


\bibitem[\protect\citeauthoryear{Bramoull\'{e}, Kranton, and
  D'Amours}{Bramoull\'{e} et~al\mbox{.}}{2014}]%
        {bramoulle:kranton:damours:strategic}
\bibfield{author}{\bibinfo{person}{Yann Bramoull\'{e}}, \bibinfo{person}{Rachel
  Kranton}, {and} \bibinfo{person}{Martin D'Amours}.}
  \bibinfo{year}{2014}\natexlab{}.
\newblock \showarticletitle{Strategic Interaction and Networks}.
\newblock \bibinfo{journal}{\emph{Amer. Econ. Rev.}} \bibinfo{volume}{104},
  \bibinfo{number}{3} (\bibinfo{year}{2014}), \bibinfo{pages}{898--930}.
\newblock


\bibitem[\protect\citeauthoryear{Bredereck and Elkind}{Bredereck and
  Elkind}{2017}]%
        {bredereck2017manipulating}
\bibfield{author}{\bibinfo{person}{Robert Bredereck} {and}
  \bibinfo{person}{Edith Elkind}.} \bibinfo{year}{2017}\natexlab{}.
\newblock \showarticletitle{Manipulating Opinion Diffusion in Social Networks}.
  In \bibinfo{booktitle}{\emph{Proc. 26th IJCAI}}. \bibinfo{pages}{894--900}.
\newblock


\bibitem[\protect\citeauthoryear{Chen, Tong, Prakash, Eliassi-Rad, Faloutsos,
  and Faloutsos}{Chen et~al\mbox{.}}{2016}]%
        {CTPEFF:edge-manipulation}
\bibfield{author}{\bibinfo{person}{Chen Chen}, \bibinfo{person}{Hanghang Tong},
  \bibinfo{person}{B.~Aditya Prakash}, \bibinfo{person}{Tina Eliassi-Rad},
  \bibinfo{person}{Michalis Faloutsos}, {and} \bibinfo{person}{Christos
  Faloutsos}.} \bibinfo{year}{2016}\natexlab{}.
\newblock \showarticletitle{Eigen-Optimization on Large Graphs by Edge
  Manipulation}.
\newblock \bibinfo{journal}{\emph{ACM TKDD}} \bibinfo{volume}{10},
  \bibinfo{number}{4} (\bibinfo{year}{2016}), \bibinfo{pages}{49:1--49:30}.
\newblock


\bibitem[\protect\citeauthoryear{Cornu{\'e}jols}{Cornu{\'e}jols}{1988}]%
        {cornuejols1988general}
\bibfield{author}{\bibinfo{person}{G{\'e}rard Cornu{\'e}jols}.}
  \bibinfo{year}{1988}\natexlab{}.
\newblock \showarticletitle{General factors of graphs}.
\newblock \bibinfo{journal}{\emph{Journal of Combinatorial Theory, Series B}}
  \bibinfo{volume}{45}, \bibinfo{number}{2} (\bibinfo{year}{1988}),
  \bibinfo{pages}{185--198}.
\newblock


\bibitem[\protect\citeauthoryear{Dughmi}{Dughmi}{2017}]%
        {dughmi:information-structure}
\bibfield{author}{\bibinfo{person}{Shaddin Dughmi}.}
  \bibinfo{year}{2017}\natexlab{}.
\newblock \showarticletitle{Algorithmic Information Structure Design: A
  Survey}.
\newblock \bibinfo{journal}{\emph{ACM SIGecom Exchanges}} \bibinfo{volume}{15},
  \bibinfo{number}{2} (\bibinfo{year}{2017}), \bibinfo{pages}{2--24}.
\newblock


\bibitem[\protect\citeauthoryear{Edmonds}{Edmonds}{1965}]%
        {edmonds1965paths}
\bibfield{author}{\bibinfo{person}{Jack Edmonds}.}
  \bibinfo{year}{1965}\natexlab{}.
\newblock \showarticletitle{Paths, trees, and flowers}.
\newblock \bibinfo{journal}{\emph{Canadian Journal of Mathematics}}
  \bibinfo{volume}{17} (\bibinfo{year}{1965}), \bibinfo{pages}{449--467}.
\newblock


\bibitem[\protect\citeauthoryear{Feldman, Kempe, Lucier, and Leme}{Feldman
  et~al\mbox{.}}{2013}]%
        {PublicGoods}
\bibfield{author}{\bibinfo{person}{Michal Feldman}, \bibinfo{person}{David
  Kempe}, \bibinfo{person}{Brendan Lucier}, {and} \bibinfo{person}{Renato~Paes
  Leme}.} \bibinfo{year}{2013}\natexlab{}.
\newblock \showarticletitle{Pricing Public Goods for Private Sale}. In
  \bibinfo{booktitle}{\emph{Proc. 14th ACM EC}}. \bibinfo{pages}{417--434}.
\newblock


\bibitem[\protect\citeauthoryear{Galeotti, Goyal, Jackson, Vega-Redondo, and
  Yariv}{Galeotti et~al\mbox{.}}{2010}]%
        {galeotti2010network}
\bibfield{author}{\bibinfo{person}{Andrea Galeotti}, \bibinfo{person}{Sanjeev
  Goyal}, \bibinfo{person}{Matthew~O. Jackson}, \bibinfo{person}{Fernando
  Vega-Redondo}, {and} \bibinfo{person}{Leeat Yariv}.}
  \bibinfo{year}{2010}\natexlab{}.
\newblock \showarticletitle{Network Games}.
\newblock \bibinfo{journal}{\emph{The Review of Economic Studies}}
  \bibinfo{volume}{77}, \bibinfo{number}{1} (\bibinfo{year}{2010}),
  \bibinfo{pages}{218--244}.
\newblock


\bibitem[\protect\citeauthoryear{Garey and Johnson}{Garey and Johnson}{1979}]%
        {garey:johnson}
\bibfield{author}{\bibinfo{person}{Michael Garey} {and}
  \bibinfo{person}{David~S. Johnson}.} \bibinfo{year}{1979}\natexlab{}.
\newblock \bibinfo{booktitle}{\emph{Computers and Intractability: A guide to
  the Theory of NP-completeness}}.
\newblock \bibinfo{publisher}{Freeman}.
\newblock


\bibitem[\protect\citeauthoryear{Garimella, Morales, Gionis, and
  Mathioudakis}{Garimella et~al\mbox{.}}{2018}]%
        {garimella:morales:gionis:mathioudakis}
\bibfield{author}{\bibinfo{person}{Kiran Garimella}, \bibinfo{person}{Gianmarco
  De~Francisci Morales}, \bibinfo{person}{Aristides Gionis}, {and}
  \bibinfo{person}{Michael Mathioudakis}.} \bibinfo{year}{2018}\natexlab{}.
\newblock \showarticletitle{Reducing Controversy by Connecting Opposing Views}.
  In \bibinfo{booktitle}{\emph{Proc. 27th IJCAI}}. \bibinfo{pages}{5249--5253}.
\newblock


\bibitem[\protect\citeauthoryear{Ghosh and Boyd}{Ghosh and Boyd}{2006}]%
        {ghosh:boyd:growing}
\bibfield{author}{\bibinfo{person}{Arpita Ghosh} {and} \bibinfo{person}{Stephen
  Boyd}.} \bibinfo{year}{2006}\natexlab{}.
\newblock \showarticletitle{Growing Well-Connected Graphs}. In
  \bibinfo{booktitle}{\emph{Proc.~45th IEEE Conference on Decision and Control
  (CDC)}}. \bibinfo{pages}{6605--6611}.
\newblock


\bibitem[\protect\citeauthoryear{Golovach and Mertzios}{Golovach and
  Mertzios}{2017}]%
        {golovach2017graph}
\bibfield{author}{\bibinfo{person}{Petr~A. Golovach} {and}
  \bibinfo{person}{George~B. Mertzios}.} \bibinfo{year}{2017}\natexlab{}.
\newblock \showarticletitle{Graph editing to a given degree sequence}.
\newblock \bibinfo{journal}{\emph{Theoretical Computer Science}}
  \bibinfo{volume}{665} (\bibinfo{year}{2017}), \bibinfo{pages}{1--12}.
\newblock


\bibitem[\protect\citeauthoryear{Grossklags, Christin, and Chuang}{Grossklags
  et~al\mbox{.}}{2008}]%
        {grossklags2008security}
\bibfield{author}{\bibinfo{person}{Jens Grossklags}, \bibinfo{person}{Nicolas
  Christin}, {and} \bibinfo{person}{John Chuang}.}
  \bibinfo{year}{2008}\natexlab{}.
\newblock \showarticletitle{Security and insurance management in networks with
  heterogeneous agents}. In \bibinfo{booktitle}{\emph{Proc. 9th ACM EC}}.
  \bibinfo{pages}{160--169}.
\newblock


\bibitem[\protect\citeauthoryear{Haeringer}{Haeringer}{2018}]%
        {Haeringer18}
\bibfield{author}{\bibinfo{person}{Guillaume Haeringer}.}
  \bibinfo{year}{2018}\natexlab{}.
\newblock \bibinfo{booktitle}{\emph{Market Design: Auctions and Matching}}.
\newblock \bibinfo{publisher}{The MIT Press}.
\newblock


\bibitem[\protect\citeauthoryear{Kearns, Littman, and Singh}{Kearns
  et~al\mbox{.}}{2001}]%
        {kearns2013graphical}
\bibfield{author}{\bibinfo{person}{Michael Kearns}, \bibinfo{person}{Michael~L.
  Littman}, {and} \bibinfo{person}{Satinder Singh}.}
  \bibinfo{year}{2001}\natexlab{}.
\newblock \showarticletitle{Graphical models for game theory}. In
  \bibinfo{booktitle}{\emph{Proc. 17th UAI}}. \bibinfo{pages}{253--260}.
\newblock


\bibitem[\protect\citeauthoryear{Levit, Komarovsky, Grinshpoun, and
  Meisels}{Levit et~al\mbox{.}}{2018}]%
        {Levit18}
\bibfield{author}{\bibinfo{person}{Vadim Levit}, \bibinfo{person}{Zohar
  Komarovsky}, \bibinfo{person}{Tal Grinshpoun}, {and} \bibinfo{person}{Amnon
  Meisels}.} \bibinfo{year}{2018}\natexlab{}.
\newblock \showarticletitle{Incentive-based search for efficient equilibria of
  the public goods game}.
\newblock \bibinfo{journal}{\emph{Artificial Intelligence}}
  \bibinfo{volume}{262} (\bibinfo{year}{2018}), \bibinfo{pages}{142--162}.
\newblock


\bibitem[\protect\citeauthoryear{Lov{\'a}sz}{Lov{\'a}sz}{1970}]%
        {lovasz1970prescribed}
\bibfield{author}{\bibinfo{person}{L{\'a}szl{\'o} Lov{\'a}sz}.}
  \bibinfo{year}{1970}\natexlab{}.
\newblock \showarticletitle{Subgraphs with prescribed valencies}.
\newblock \bibinfo{journal}{\emph{Journal of Combinatorial Theory}}
  \bibinfo{volume}{8} (\bibinfo{year}{1970}), \bibinfo{pages}{391--416}.
\newblock


\bibitem[\protect\citeauthoryear{Lov{\'a}sz}{Lov{\'a}sz}{1972}]%
        {lovasz1972factorization}
\bibfield{author}{\bibinfo{person}{L{\'a}szl{\'o} Lov{\'a}sz}.}
  \bibinfo{year}{1972}\natexlab{}.
\newblock \showarticletitle{The factorization of graphs. II}.
\newblock \bibinfo{journal}{\emph{Acta Mathematica Academiae Scientiarum
  Hungarica}} \bibinfo{volume}{23}, \bibinfo{number}{1-2}
  (\bibinfo{year}{1972}), \bibinfo{pages}{223--246}.
\newblock


\bibitem[\protect\citeauthoryear{Mas-Collel, Whinston, and Green}{Mas-Collel
  et~al\mbox{.}}{1995}]%
        {mas-collel:whinston:green}
\bibfield{author}{\bibinfo{person}{Andreu Mas-Collel},
  \bibinfo{person}{Michael~D. Whinston}, {and} \bibinfo{person}{Jerry~R.
  Green}.} \bibinfo{year}{1995}\natexlab{}.
\newblock \bibinfo{booktitle}{\emph{Microeconomic Theory}}.
\newblock \bibinfo{publisher}{Oxford University Press}.
\newblock


\bibitem[\protect\citeauthoryear{Mathieson and Szeider}{Mathieson and
  Szeider}{2012}]%
        {mathieson2012editing}
\bibfield{author}{\bibinfo{person}{Luke Mathieson} {and}
  \bibinfo{person}{Stefan Szeider}.} \bibinfo{year}{2012}\natexlab{}.
\newblock \showarticletitle{Editing graphs to satisfy degree constraints: A
  parameterized approach}.
\newblock \bibinfo{journal}{\emph{J. Comput. System Sci.}}
  \bibinfo{volume}{78}, \bibinfo{number}{1} (\bibinfo{year}{2012}),
  \bibinfo{pages}{179--191}.
\newblock


\bibitem[\protect\citeauthoryear{Matteo~Castiglioni}{Matteo~Castiglioni}{2020}]%
        {matteo2020manipulating}
\bibfield{author}{\bibinfo{person}{Nicola~Gatti Matteo~Castiglioni,
  Diodato~Ferraioli}.} \bibinfo{year}{2020}\natexlab{}.
\newblock \showarticletitle{Election Control in Social Networks via Edge
  Addition or Removal}. In \bibinfo{booktitle}{\emph{Proc. 34th AAAI}}. AAAI
  Press.
\newblock


\bibitem[\protect\citeauthoryear{Milchtaich}{Milchtaich}{2015}]%
        {milchtaich2015network}
\bibfield{author}{\bibinfo{person}{Igal Milchtaich}.}
  \bibinfo{year}{2015}\natexlab{}.
\newblock \showarticletitle{Network topology and equilibrium existence in
  weighted network congestion games}.
\newblock \bibinfo{journal}{\emph{International Journal of Game Theory}}
  \bibinfo{volume}{44}, \bibinfo{number}{3} (\bibinfo{year}{2015}),
  \bibinfo{pages}{515--541}.
\newblock


\bibitem[\protect\citeauthoryear{Nisan, Roughgarden, Tardos, and
  Vazirani}{Nisan et~al\mbox{.}}{2007}]%
        {Nisan07}
\bibfield{editor}{\bibinfo{person}{Noam Nisan}, \bibinfo{person}{Tim
  Roughgarden}, \bibinfo{person}{Eva Tardos}, {and} \bibinfo{person}{Vijay~V.
  Vazirani}} (Eds.). \bibinfo{year}{2007}\natexlab{}.
\newblock \bibinfo{booktitle}{\emph{Algorithmic Game Theory}}.
\newblock \bibinfo{publisher}{Cambridge University Press}.
\newblock


\bibitem[\protect\citeauthoryear{Samuelson}{Samuelson}{1954}]%
        {samuelson:public-expenditure}
\bibfield{author}{\bibinfo{person}{Paul~A. Samuelson}.}
  \bibinfo{year}{1954}\natexlab{}.
\newblock \showarticletitle{The pure theory of public expenditure}.
\newblock \bibinfo{journal}{\emph{Review of Economics and Statistics}}
  \bibinfo{volume}{36} (\bibinfo{year}{1954}), \bibinfo{pages}{387--389}.
\newblock


\bibitem[\protect\citeauthoryear{Sheldon, Dilkina, Elmachtoub, Finseth,
  Sabharwal, Conrad, Gomes, Shmoys, Allen, Amundsen, and Vaughan}{Sheldon
  et~al\mbox{.}}{2010}]%
        {Sheldon10}
\bibfield{author}{\bibinfo{person}{Daniel Sheldon}, \bibinfo{person}{Bistra
  Dilkina}, \bibinfo{person}{Adam~N. Elmachtoub}, \bibinfo{person}{Ryan
  Finseth}, \bibinfo{person}{Ashish Sabharwal}, \bibinfo{person}{Jon Conrad},
  \bibinfo{person}{Carla Gomes}, \bibinfo{person}{David Shmoys},
  \bibinfo{person}{William Allen}, \bibinfo{person}{Ole Amundsen}, {and}
  \bibinfo{person}{William Vaughan}.} \bibinfo{year}{2010}\natexlab{}.
\newblock \showarticletitle{Maximizing the Spread of Cascades Using Network
  Design}. In \bibinfo{booktitle}{\emph{Proc. 26th UAI}}.
  \bibinfo{pages}{517--526}.
\newblock


\bibitem[\protect\citeauthoryear{Shoham and Leyton-Brown}{Shoham and
  Leyton-Brown}{2009}]%
        {Shoham09}
\bibfield{author}{\bibinfo{person}{Yoav Shoham} {and} \bibinfo{person}{Kevin
  Leyton-Brown}.} \bibinfo{year}{2009}\natexlab{}.
\newblock \bibinfo{booktitle}{\emph{Multiagent Systems}}.
\newblock \bibinfo{publisher}{Cambridge University Press}.
\newblock


\bibitem[\protect\citeauthoryear{Sina, Hazon, Hassidim, and Kraus}{Sina
  et~al\mbox{.}}{2015}]%
        {sina2015adapting}
\bibfield{author}{\bibinfo{person}{Sigal Sina}, \bibinfo{person}{Noam Hazon},
  \bibinfo{person}{Avinatan Hassidim}, {and} \bibinfo{person}{Sarit Kraus}.}
  \bibinfo{year}{2015}\natexlab{}.
\newblock \showarticletitle{Adapting the social network to affect elections}.
  In \bibinfo{booktitle}{\emph{Proc. 14th AAMAS}}. \bibinfo{pages}{705--713}.
\newblock


\bibitem[\protect\citeauthoryear{Sless, Hazon, Kraus, and Wooldridge}{Sless
  et~al\mbox{.}}{2014}]%
        {sless2014forming}
\bibfield{author}{\bibinfo{person}{Liat Sless}, \bibinfo{person}{Noam Hazon},
  \bibinfo{person}{Sarit Kraus}, {and} \bibinfo{person}{Michael Wooldridge}.}
  \bibinfo{year}{2014}\natexlab{}.
\newblock \showarticletitle{Forming coalitions and facilitating relationships
  for completing tasks in social networks}. In \bibinfo{booktitle}{\emph{Proc.
  13th AAMAS}}. \bibinfo{pages}{261--268}.
\newblock


\bibitem[\protect\citeauthoryear{Stewart}{Stewart}{1994}]%
        {stewart1994deciding}
\bibfield{author}{\bibinfo{person}{Iain~A. Stewart}.}
  \bibinfo{year}{1994}\natexlab{}.
\newblock \showarticletitle{Deciding whether a planar graph has a cubic
  subgraph is NP-complete}.
\newblock \bibinfo{journal}{\emph{Discrete Mathematics}} \bibinfo{volume}{126},
  \bibinfo{number}{1-3} (\bibinfo{year}{1994}), \bibinfo{pages}{349--357}.
\newblock


\bibitem[\protect\citeauthoryear{Tong, Prakash, Eliassi-Rad, Faloutsos, and
  Faloutsos}{Tong et~al\mbox{.}}{2012}]%
        {TPEFF:gelling}
\bibfield{author}{\bibinfo{person}{Hanghang Tong}, \bibinfo{person}{B.~Aditya
  Prakash}, \bibinfo{person}{Tina Eliassi-Rad}, \bibinfo{person}{Michalis
  Faloutsos}, {and} \bibinfo{person}{Christos Faloutsos}.}
  \bibinfo{year}{2012}\natexlab{}.
\newblock \showarticletitle{Gelling, and melting, large graphs by edge
  manipulation}. In \bibinfo{booktitle}{\emph{Proc. 21st CIKM}}.
  \bibinfo{pages}{245--254}.
\newblock


\bibitem[\protect\citeauthoryear{Tutte}{Tutte}{1954}]%
        {tutte1954short}
\bibfield{author}{\bibinfo{person}{William~Thomas Tutte}.}
  \bibinfo{year}{1954}\natexlab{}.
\newblock \showarticletitle{A short proof of the factor theorem for finite
  graphs}.
\newblock \bibinfo{journal}{\emph{Canadian Journal of Mathematics}}
  \bibinfo{volume}{6} (\bibinfo{year}{1954}), \bibinfo{pages}{347--352}.
\newblock


\bibitem[\protect\citeauthoryear{Yannakakis}{Yannakakis}{1978}]%
        {yannakakis1978node}
\bibfield{author}{\bibinfo{person}{Mihalis Yannakakis}.}
  \bibinfo{year}{1978}\natexlab{}.
\newblock \showarticletitle{Node-and edge-deletion NP-complete problems}. In
  \bibinfo{booktitle}{\emph{Proc. 10th ACM STOC}}. \bibinfo{pages}{253--264}.
\newblock


\bibitem[\protect\citeauthoryear{Yu, Zhou, Brantingham, and Vorobeychik}{Yu
  et~al\mbox{.}}{2020}]%
        {yu2019computing}
\bibfield{author}{\bibinfo{person}{Sixie Yu}, \bibinfo{person}{Kai Zhou},
  \bibinfo{person}{P.~Jeffrey Brantingham}, {and} \bibinfo{person}{Yevgeniy
  Vorobeychik}.} \bibinfo{year}{2020}\natexlab{}.
\newblock \showarticletitle{Computing Equilibria in Binary Networked Public
  Goods Games}. In \bibinfo{booktitle}{\emph{Proc. 34th AAAI}}. AAAI Press.
\newblock


\bibitem[\protect\citeauthoryear{Zhang, Vorobeychik, Letchford, and
  Lakkaraju}{Zhang et~al\mbox{.}}{2016}]%
        {Zhang16}
\bibfield{author}{\bibinfo{person}{Haifeng Zhang}, \bibinfo{person}{Yevgeniy
  Vorobeychik}, \bibinfo{person}{Joshua Letchford}, {and}
  \bibinfo{person}{Kiran Lakkaraju}.} \bibinfo{year}{2016}\natexlab{}.
\newblock \showarticletitle{Data-driven agent-based modeling, with application
  to rooftop solar adoption}.
\newblock \bibinfo{journal}{\emph{Journal of Autonomous Agents and Multiagent
  Systems}} \bibinfo{volume}{30}, \bibinfo{number}{6} (\bibinfo{year}{2016}),
  \bibinfo{pages}{1023--1049}.
\newblock


\end{thebibliography}

\newpage

\appendix

\section{Appendix}

Here, we provide the proof of Theorem~\ref{th:general-g-hardness}.
  
\begin{extraproof}{Theorem~\ref{th:general-g-hardness}}
We prove NP-hardness by a reduction from the \textsc{Vertex Cover} (VC) problem. In an instance of VC, we are given a graph $H=(V_H, E_H)$ and a positive integer $k$, and asked if $H$ has a vertex cover of size at most $k$, i.e., a subset $S \subseteq V_H$ of at most $k$ nodes such that each edge $e \in E_H$ has at least one endpoint in $S$.
From $H, k$, we construct an instance of \NetDesignP{general}{\text{all}}, consisting of a graph $G'=(V, E')$, investment degree sets \Degs{i} for each node $i \in V$, costs \EdgeCost{(i,j)} for edge addition/removal, and a budget $B$.

The set of nodes $V$ consists of $V_H$, one vertex $u_e$ for each edge $e \in E_H$, and one additional vertex $w$. The edges $E'$ are as follows:
\begin{itemize}
  \item $w$ is connected to all nodes in $V_H$.
  \item There is an edge between $v$ and $u_e$ if and only if $v$ is an endpoint of $e$ in $H$.
\end{itemize}
Let $\Degree[H]{v}$ be the degree of $v$ in $H$.
We define the investment degree sets for the agents as follows:
\begin{itemize}
\item For every $v \in V_H$, we let $\Degs{v}=\SET{0, \Degree[H]{v}+1}$.
\item For every $u_e$, we let $\Degs{u_e}=\SET{1, 2}$.
\item $\Degs{w} = \SET{0, 1, \ldots, k}$.
\end{itemize}
Finally, we set the costs $\EdgeCost{e} = 0$ for $e \in E'$, and $\EdgeCost{(i,j)} = \infty$ for $(i,j) \notin E'$. The budget is $B=1$ (or really any non-negative number). Thus, the principal can remove as many edges as he wants, but cannot add any edges. This completes the reduction, which obviously runs in polynomial time.

First, assume that $H$ has a vertex cover of size at most $k$. We show that there is a way to remove edges from $E'$ such that each player $i$'s degree ends up in \Degs{i}.
Let $S$ be the vertex cover of $H$.
Let $E \subseteq E'$ be the set of all edges \emph{not} incident on $V_H \setminus S$. 
Then, $w$ is only incident on edges whose other endpoint is in $S$, so it has degree at most $k$. Each node $v \in V_H \setminus S$ has all its edges removed, so its degree is $0$. 
Each node $v \in S$ is connected to $\Degree[H]{v}$ nodes corresponding to the $\Degree[H]{v}$ edges incident on $v$ in $H$, plus its one edge to $w$, so its degree is $\Degree[H]{v}+1$.
Finally, because $S$ is a vertex cover, each node $u_e$ is incident on at least one node $v \in S$, so its degree is $1$ or $2$. Thus, we have shown that each node $i \in V$ has degree in \Degs{i}.

For the converse direction, assume that there is a set $E \subseteq E'$ of edges such that in $G=(V,E)$, each node $i$ has degree in \Degs{i}.
Let $S = \Set{v \in V_H}{(w,v) \in E}$ be the set of vertices whose edge to $w$ is kept. Because the degree of $w$ is in \Degs{w}, we get that $\SetCard{S} \leq k$.
For each node $v \in V_H \setminus S$, at least the edge to $w$ was removed, so its degree cannot be $\Degree[H]{v}+1$. Therefore, its degree must be 0, so $E$ cannot contain any edges incident on any $v \in V_H \setminus S$.
For each node $v \in S$, at least the edge to $w$ was retained, so its degree cannot be 0. Therefore, its degree must be $\Degree[H]{v}+1$, so $E$ must contain all edges incident on all $v \in S$.
Finally, because each node $u_e$ has degree in $\SET{1,2}$ in $(V,E)$, each must have a neighbor in $S$. In other words, each edge $e \in E_H$ has at least one endpoint in $S$. This proves that $S$ is a vertex cover of $H$.
\end{extraproof}




\end{document}